\newcommand{\ket}[1]{|#1\rangle}
\newcommand{\Cent}[0]{\mbox{\textcent}}
\newcommand{\dollar}[0]{\$}
\newtheorem{openproblem}{Open Problem}
\newtheorem{fact}{Fact}
\title{Quantum counter automata\thanks{A preliminary version of this paper appeared as \textit{A.~C.~Cem Say, Abuzer Yakary{\i}lmaz, and {\c S}efika Y\"{u}zsever.
\newblock Quantum one-way one-counter automata.
\newblock In R\={u}si\c{n}\v{s} Freivalds, editor, {\em Randomized and quantum
  computation}, pages 25--34, 2010 (\newblock Satellite workshop of MFCS and CSL 2010).}}}
\author{A.C. Cem Say$ ^{\mbox{\tiny 1,}} $\thanks{Say's work was partially supported by the Scientific and Technological
Research Council of Turkey (T\"{U}B\.ITAK) with grant 108E142.} \and Abuzer Yakary{\i}lmaz$ ^{\mbox{\tiny 2,}} $\thanks{Yakary{\i}lmaz was partially supported by the Scientific and Technological
Research Council of Turkey (T\"{U}B\.ITAK) with grant 108E142 and FP7 FET-Open project QCS.}}
\institute{$ ^{\mbox{\tiny 1}} $Bo\u{g}azi\c{c}i University, Department of Computer Engineering, Bebek 34342 \.{I}stanbul, Turkey
\\ \email{say@boun.edu.tr} ~~ \\
$ ^{\mbox{\tiny 2}} $University of Latvia, Faculty of Computing, Raina bulv. 19, Riga, LV-1586, Latvia \\
\email{abuzer@boun.edu.tr}
 \\~~\\
\today
}
\begin{document}

\newlength{\twidth}
\maketitle
\pagenumbering{arabic}

\begin{abstract} \label{abstract:Abstract}

The question of whether quantum real-time one-counter automata (rtQ1CAs) can outperform their probabilistic counterparts has been open for more than a decade. We provide an affirmative answer to this question, by demonstrating a non-context-free language that can be recognized with perfect soundness by a rtQ1CA. This is the first demonstration of the superiority of a quantum model to the corresponding classical one in the real-time case with an error bound less than 1.
We also introduce a generalization of the rtQ1CA, the quantum one-way one-counter automaton (1Q1CA), and show that they too are superior to the corresponding family of probabilistic machines. For this purpose, we provide  general definitions of these models that reflect the modern approach to the definition of quantum finite automata, and point out some problems with previous results. We identify several remaining open problems.

\end{abstract}

\section{Introduction} \label{section:Introduction}
Although a complete understanding of the relationship between the polynomial time complexity classes corresponding to classical and quantum computers seems to be a distant goal, a restricted version of this question for constant-memory machines has already been answered: Linear-time quantum finite automata (QFAs) that are allowed to pause for some steps on a symbol in a single left-to-right scan of the input string can solve some problems for which  probabilistic machines, even with two-way access to their input,  require exponential time \cite{AI99,DS90}. Interestingly, when these automata are further restricted to perform real-time access to the input, i.e. forbidden to pause, the probabilistic and quantum models have identical language recognition power \cite{Bo03,Je07,YS11A}. To our knowledge, no quantum automaton model has yet been shown to outperform its probabilistic counterpart in terms of language recognition with one-sided error bound less than 1 in the real-time mode, which corresponds to the smallest possible nontrivial time bound. In this paper, we give the first demonstration of such a superiority in the case of the real-time one-counter automaton model.

One-counter automata can simply be thought of as finite automata enhanced by the addition of a single integer counter of unlimited capacity. Instructions in the programming language of these machines can increment or decrement this counter, and test its value for being zero, in addition to the standard state transition actions of finite automata.
The study of quantum  real-time one-counter automata 
was initiated by Kravtsev \cite{Kr99}, who based his definition on a popular QFA model of the time, introduced by Kondacs and Watrous \cite{KW97}. Kravtsev provided some example machines recognizing certain languages, all of which were later shown by Yamasaki \textit{et al.} \cite{YKTI02} to be also recognizable by classical probabilistic reversible real-time automata. 

It is now accepted \cite{Hi10,YS11A} that the Kondacs-Watrous QFA model has been defined in an unnecessarily restricted way, and does not utilize the full flexibility provided by quantum physics. As a result, these QFAs are strictly less powerful than even classical deterministic finite automata \cite{KW97}, and this weakness also affected the Kravtsev model of quantum counter automata, with Yamasaki \textit{et al.} even demonstrating \cite{YKTI02} a regular language which they could not recognize with bounded error. On the other hand, Bonner \textit{et al.} \cite{BFK01} claimed to demonstrate a Kravtsev machine recognizing a particular language that could not be recognized by any probabilistic real-time one-counter automaton.

We should also note that Yamasaki \textit{et al.} \cite{YKI05} studied quantum one-counter automata with two-way access to their input, and  have shown that the class of languages they  recognize with bounded error contains some languages that are unrecognizable by  deterministic two-way one-counter automata that are restricted to perform a fixed number of counter reversals. The fairer question about whether these two-way quantum machines can outperform their \textit{probabilistic} counterparts, which are also allowed to make bounded error, remains open.

The modern approach to the definition of quantum computer models \cite{Wa03,YS11A} has as an easy consequence that any quantum machine can simulate its probabilistic counterpart efficiently, and the real question is whether the quantum version can outperform the classical one or not. In this paper, 
we first provide a general definition of the quantum real-time one-counter
automaton (rtQ1CA) that reflects this modern approach. In an earlier version of this manuscript \cite{SYY10}, we pointed out that the above-mentioned result of Bonner \textit{et al.} about the relationship of the quantum and classical real-time one-counter automaton models is flawed, and identified the related question as still being open. Here, we provide our own proof about a different language where the quantum model is indeed superior to its probabilistic counterpart. We then define a new model, the quantum one-way one-counter automaton, and prove a stronger result about the comparative powers of quantum and classical machines of this type. It turns out that the ability to pause the read head on the tape for some steps can be used to perform error reduction in quantum one-way machines. We also make some observations about the relationship of two-way quantum counter automata with  some seemingly more restricted models.

\section{Real-Time One-Counter Automata} \label{section:PFA-QA}
We start with clarifying a potentially confusing matter of terminology. A real-time machine is one that moves its input tape head to the right at each step of its execution, terminating after executing its last instruction on the right input end-marker \cite{Ra63B}. A one-way machine also performs a single left-to right scan of the input, but has the additional capacity of pausing its head for some steps on an input symbol \cite{SHL65}. The source of the confusion is that some authors, who do not consider the latter case, have used the designation 
``one-way" for what we call real-time machines, and what we call one-way machines have variously been called ``on-line" or ``1.5-way."  This section is on real-time counter automata. We will define and study general one-way automata, for the first time, in Section \ref{section:onewaysuperiority}.

A real-time one-counter automaton (rt1CA) can be seen as a real-time finite state
machine augmented with a counter which can be modified by an amount
from the set $ \lozenge = \{ -1, 0 , +1 \} $, and where the sign of this counter is also taken
into account during transitions. 
A configuration of a rt1CA is represented by $ (q,k) $, where $ q \in Q
$, and $ k \in \mathbb{Z} $
represents the value of the counter.

A probabilistic rt1CA (probabilistic real-time one-counter automaton --
rtP1CA) is a 5-tuple
\[
       \mathcal{P} = (Q,\Sigma,\delta,q_{1},Q_{a}),
\]
where $ Q $ denotes the set of internal states, $\Sigma$ is the input alphabet, not containing the end-marker symbols
$ \Cent $ and $ \dollar $, $ q_{1} \in Q $ is the initial state, 
$ Q_{a} \subseteq Q $ is the set of accept states, and
$ \delta $ is known as the transition function. We  define $ \tilde{\Sigma} = \Sigma \cup \{ \Cent, \dollar \} $.

The probability that a rtP1CA which reads symbol $ \sigma
\in \tilde{\Sigma} $ when it is in state $ q \in Q $, and when its
counter
has sign $ \theta \in \Theta = \{ 0,\pm \}$, will switch to state $ q^{\prime} \in Q $ and
modify the value of the counter by $ c \in \lozenge $ is
\[
       \delta(q,\sigma,\theta,q^{\prime},c) \in \mathbb{R},
\]

The transition function $ \delta $ must satisfy the following condition of well-formedness, dictated by the elementary rules of probability:
For any choice of $ q \in Q $, $ \sigma \in \tilde{ \Sigma } $ , and $ \theta \in \Theta $,
\[
       \sum\limits_{q^{\prime} \in Q, c \in \lozenge }
\delta(q,\sigma,\theta,q^{\prime},c) = 1.
\]

Given an input string $ w \in \Sigma $, a rtP1CA scans the tape content $ \tilde{w} = \Cent w \dollar  $ from left to right, and accepts the input if it is in an accept state after
reading the right end-marker $ \dollar $. The probability that machine $ \mathcal{P} $ will accept the input $ w $ is denoted by	$ f_{\mathcal{P}}^{a}(w) $. The probability that $ \mathcal{P} $ will reject $ w $ is denoted by	$ f_{\mathcal{P}}^{r}(w) $.

It is  easier to construct well-formed rt1CAs which obey the additional restriction that the counter increment $ c \in \lozenge $ associated with each transition to state $
q^{\prime}$ upon reading tape symbol $
\sigma $ is determined by the pair $
(q^{\prime},\sigma)$, for any $
(q^{\prime},\sigma) \in Q \times \tilde{\Sigma} $. Such machines are said to be ``simple" \cite{Kr99,BFK01,YKTI02,YKI05}.
We denote the above-mentioned relation by the function $ D_{c} : Q \times \tilde{\Sigma}
\rightarrow \lozenge $.
The 
counter increment argument is dropped from the definition of the transition function of simple machines,\footnote{In fact, choosing the domain of $ D_{c} $ as $ Q $ instead of
$ Q \times \tilde{\Sigma} $ does not decrease the computational power of the
``simple" models that we will define shortly. However, this is not known to apply for the KQ1CA, the Kravtsev model of rtQ1CA,
which is the variant originally defined in the literature.} so the well-formedness condition for simple rtP1CAs is as follows:
For any choice of $ q \in Q $, $ \sigma \in \tilde{ \Sigma } $ , and $ \theta \in \Theta $,
\[
       \sum\limits_{q^{\prime} \in Q}
\delta(q,\sigma,\theta,q^{\prime}) = 1.
\]
Transition functions of simple rtP1CAs can be represented by using a left stochastic transition matrix for
each $ \sigma \in \tilde{\Sigma} $
and $ \theta \in \Theta $, say, $ A_{\sigma,\theta} $, whose rows and columns
are indexed by internal
states, and $ A_{\sigma,\theta}[j,i] $ (i.e. the $ (j,i)^{th} $ entry of $ A_{\sigma,\theta} $) represents the probability of
transition  from state $ q_{i} $
to $ q_{j} $ upon reading symbol $ \sigma $ with counter sign $ \theta
$ (thereby incrementing 
the counter by $ D_{c}(q_{j},\sigma) $). 

\begin{fact}\label{fact:simpp}
       For any rtP1CA $ \mathcal{P}_{1} $,  there exists a simple rtP1CA $ \mathcal{P}_{2} $,
       such that for all $ w \in \Sigma^{*} $,
       \[
               f_{\mathcal{P}_{1}}^{a}(w) = f_{\mathcal{P}_{2}}^{a}(w).
       \]
\end{fact}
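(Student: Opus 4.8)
The plan is to fold the choice of counter increment into the internal state. Given an arbitrary rtP1CA $\mathcal{P}_1 = (Q,\Sigma,\delta,q_1,Q_a)$, I would build a simple rtP1CA $\mathcal{P}_2$ whose state set is $Q' = Q \times \lozenge$; intuitively, a state $(q,c)$ records that $\mathcal{P}_1$ is in state $q$ and that the transition which just led there modified the counter by $c$. The increment function is then forced to depend on the target state alone, namely $D_c\big((q,c),\sigma\big) = c$ for every $\sigma \in \tilde{\Sigma}$ (this does not even use $\sigma$, which is stronger than the ``simple'' requirement; cf. the footnote above). The transition function is $\delta'\big((q,c),\sigma,\theta,(q',c')\big) = \delta(q,\sigma,\theta,q',c')$ — the second component of the \emph{source} state is simply ignored — the accepting set is $Q_a' = Q_a \times \lozenge$, and the initial state is $(q_1,0)$. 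Well-formedness in the simple sense is immediate, since for fixed $(q,c)$, $\sigma$, $\theta$ one has $\sum_{(q',c') \in Q'} \delta'\big((q,c),\sigma,\theta,(q',c')\big) = \sum_{q' \in Q,\, c' \in \lozenge} \delta(q,\sigma,\theta,q',c') = 1$.

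Next I would establish a configuration-level correspondence by induction on the number $t \ge 1$ of scanned tape symbols: for every $q \in Q$ and $k \in \mathbb{Z}$, the probability that $\mathcal{P}_1$ is in configuration $(q,k)$ after reading the length-$t$ prefix of $\tilde{w} = \Cent w \dollar$ equals the sum over $c \in \lozenge$ of the probabilities that $\mathcal{P}_2$ is in configuration $\big((q,c),k\big)$ at the same point. The base case ($t = 1$, the step reading $\Cent$) is direct from the definitions, using that the counter starts at $0$, that the initial counter sign is $0$ in both machines, and that $D_c\big((q',c'),\Cent\big) = c'$. For the inductive step, observe that a move of $\mathcal{P}_2$ out of $\big((q,c),k\big)$ on symbol $\sigma$ with counter sign $\theta = \mathrm{sign}(k)$ leads to $\big((q',c'),k+c'\big)$ with probability $\delta(q,\sigma,\theta,q',c')$, which mirrors exactly the move of $\mathcal{P}_1$ out of $(q,k)$ to $(q',k+c')$; summing first over $c$ (the old increment, now irrelevant) and then over $c'$ keeps the invariant intact. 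Specializing the invariant to $t = |\tilde{w}|$, i.e. to the point just after $\dollar$, and summing over $q \in Q_a$ and $c \in \lozenge$ yields $f_{\mathcal{P}_2}^{a}(w) = f_{\mathcal{P}_1}^{a}(w)$.

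The construction is entirely analogous to the standard device of pushing an update label onto the destination state, so the only care needed is conceptual rather than computational: one must note that in a simple machine the increment of a transition is governed by its \emph{target} state (together with the scanned symbol), not by its source, so that the ``remembered'' increment is consumed at exactly the right moment and never double-applied; and one must dispatch the mild exception of the initial configuration, whose state $(q_1,0)$ is never reached as the target of a transition during a run, so that the particular value ($0$) chosen for its second component is immaterial. I do not expect any genuine obstacle beyond this bookkeeping.
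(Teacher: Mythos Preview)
Your proof is correct. The paper itself states this result as a \emph{Fact} without proof, so there is no ``paper's own proof'' to compare against; the authors evidently regard the construction as standard and omit it. Your approach --- folding the counter increment into the target state via $Q' = Q \times \lozenge$ with $D_c\big((q,c),\sigma\big)=c$ --- is the natural one, and your verification of well-formedness and of the configuration-level invariant is accurate. The observation that $D_c$ can be made independent of $\sigma$ as well (matching the paper's footnote that restricting the domain of $D_c$ to $Q$ alone costs nothing) is a nice touch.
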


We now present our modern definition of the quantum versions of these machines. 

A quantum rt1CA
(quantum real-time one-counter automaton -- rtQ1CA) is a 7-tuple
\[
       \mathcal{M} = (Q,\Sigma,\Omega,\delta,q_{1},\Omega_{a}, \omega_{1}),
\]
where the new items in the definition reflect a slight change in the architecture of the machine: rtQ1CAs have an additional finite register, capable of holding a symbol from the alphabet $ \Omega $. $ \Omega_{a} \subseteq \Omega$. $ \omega_{1} \in \Omega $
		is   the initial register symbol. This register, which is updated by each transition in the program, is automatically reset to the value $ \omega_{1}$ before each step. This simple mechanism is sufficient
to implement superoperators \cite{YS11A} that are allowed by quantum theory, and provides the machines with the required flexibility.

The amplitude with which a rtQ1CA that reads symbol $ \sigma
\in \tilde{\Sigma} $ when it is in state $ q \in Q $, and when its
counter
has sign $ \theta \in \Theta $, will switch to state $ q^{\prime} \in Q $,
modify the value of the counter by $ c \in \lozenge $, and write $ \omega \in \Omega $ to the finite register is
\[
       \label{equation:delta-quantum}
               \delta(q,\sigma,\theta,q^{\prime},c,\omega) \in \mathbb{C}.
\]

A rtQ1CA can be in a superposition of
more than one configuration at the same time. At any step, the computation may split probabilistically to as many branches as there are distinct register symbols, with each branch represented by such a superposition. The ``weight" $\alpha_{\kappa}$
of each configuration $\kappa$ in these superpositions is called its amplitude, and equals the sum of the products of the amplitudes of the configurations that performed transitions to $\kappa$ in the previous step, and the amplitudes of the associated transitions. The superposition is expressed by the sum $\sum_{\kappa \in (Q\times \mathbb{Z})} \alpha_{\kappa}\ket{\kappa}$.

In order to be consistent with the restrictions imposed by quantum theory, the transition function $ \delta $ must 
 satisfy the following local conditions of well-formedness:\footnote{These turn out to be straightforward generalizations of the corresponding conditions in \cite{Kr99}.}
For any choice of $ q_{1},q_{2} \in Q $, $ \sigma \in \tilde{\Sigma} $, and
$ \theta_{1},\theta_{2} \in \Theta $,

\begin{equation}
	\label{equation:wfc1}
       \sum\limits_{q^{\prime} \in Q, c \in \lozenge, \omega \in \Omega } \mspace{-27mu}
               \overline{
               \delta(q_{1},\sigma,\theta_{1},q^{\prime},c,\omega ) }
               \delta(q_{2},\sigma,\theta_{1},q^{\prime},c,\omega )
               = \left\lbrace
               \begin{array}{lll}
                       1 ~~ &  \mbox{if }q_{1} = q_{2},  \\
                       0 &  \mbox{otherwise,}
               \end{array}
               \right.
\end{equation}

\begin{equation}\label{equation:wfc2}
       \begin{array}{ll}
               \displaystyle \sum\limits_{q^{\prime} \in Q, \omega \in \Omega }
               &
               \overline{\delta(q_{1},\sigma,\theta_{1},q^{\prime},+1,\omega)}
               \delta(q_{2},\sigma,\theta_{2},q^{\prime}, 0 ,\omega)
               \\
               & + ~
               \overline{\delta(q_{1},\sigma,\theta_{1},q^{\prime},0,\omega)}
               \delta(q_{2},\sigma,\theta_{2},q^{\prime}, -1 ,\omega) = 0,
       \end{array}     
\end{equation}
and
\begin{equation}\label{equation:wfc3}
               \displaystyle \sum\limits_{q^{\prime} \in Q, \omega \in \Omega }
               \overline{\delta(q_{1},\sigma,\theta_{1},q^{\prime},+1,\omega)}
               \delta(q_{2},\sigma,\theta_{2},q^{\prime}, -1 ,\omega) = 0.
\end{equation}

Upon processing the right end-marker, the finite register is observed using a projective measurement \cite{NC00}, which yields the outcome ``a" if the symbol written in the register is in $ \Omega_{a}$, and the outcome ``r" otherwise.
The probability that machine $ \mathcal{M} $ will accept the input $ w $, that is,	$ f_{\mathcal{M}}^{a}(w) $, is defined as the probability that ``a" will be observed as a result of this process. For each probabilistic branch, this is just the sum of the squares of the moduli of the amplitudes of the configurations which were reached by transitions writing symbols in $\Omega_{a}$ in the register at the end of the computation. See \cite{YS11A} for a detailed explanation of this procedure for related automaton models.

For simple rtQ1CAs, one only needs to consider a single
local condition for well-formedness:
For any choice of $ q_{1},q_{2} \in Q $, $ \sigma \in \tilde{\Sigma} $, and
$ \theta \in \Theta $,
\[
	\mspace{-1mu}
       \sum\limits_{q^{\prime} \in Q,  \omega \in \Omega } \mspace{-15mu}
               \overline{
               \delta(q_{1},\sigma,\theta,q^{\prime},\omega ) }
               \delta(q_{2},\sigma,\theta,q^{\prime},\omega )
               = \left\lbrace
               \begin{array}{lll}
                       1 ~~ &  \mbox{if }q_{1} = q_{2},  \\
                       0 &  \mbox{otherwise.}
               \end{array}
               \right. 
\]

Analogously to simple rt1PCAs, transition functions of simple rtQ1CAs can also be represented in matrix notation:
For each $
\sigma \in \tilde{\Sigma} $
and $ \theta \in \Theta $, one defines a superoperator $
\mathcal{E}_{\sigma,\theta}$, described by a collection $\{
E_{\sigma,\theta,\omega}\} $
for $\omega \in \Omega  $, where  $ E_{\sigma,\theta,\omega}[j,i] $ represents
the amplitude of the transition  from state $ q_{i} $
to $ q_{j} $ upon reading symbol $ \sigma $, sensing counter sign $ \theta $,
and writing $ \omega $ on the finite register
(thereby updating the value of the counter by $ D_{c}(q_{j},\sigma) $).
The resulting machine is well-formed if $ \mathcal{E}_{\sigma,\theta} $ is admissible \cite{Wa03}, i.e.
\[
       \sum_{\omega \in \Omega} E_{\sigma,\theta,\omega}^{\dagger}
E_{\sigma,\theta,\omega} = I.
\]
It is an open problem whether the computational powers of general rtQ1CAs and
simple rtQ1CAs are the same or not.

In this context, machines obeying Kravtsev's original definition \cite{Kr99}, which we call KQ1CAs, are rtQ1CA variants where $\Omega $ is partitioned to three pairwise disjoint subsets $ \Omega_{n} $, $
\Omega_{a} $, and $ \Omega_{r} $, $| \Omega_{n} | = |
\Omega_{a} | = | \Omega_{r} | = 1 $, the initial register symbol is the element of $ \Omega_{n} $,
the finite register is observed after each transition, with outcomes ``n", ``a", and ``r", corresponding to these subsets, and the machine halts whenever ``a" or ``r" is observed, or the tape string has been completely traversed. It is easy to show that the KQ1CA's additional capability of halting without traversing the complete input gives it no superiority over the general rtQ1CA, since KQ1CA algorithms that use this feature have equivalent standard rtQ1CA algorithms which employ a somewhat larger register alphabet.

A quantum automaton with blind counter is one that
never checks the status of its counter,  and accepts its input only if the counter is zero,
and the processing of the input has ended by writing an accept symbol in the register. The $\Theta$ argument is therefore
removed from $\delta$ in the definitions of these machines.

We conclude this section by introducing some more basic notational items and terminology that will be used in the rest of the paper.

The language $ L \subseteq \Sigma^{*} $ recognized by machine $
\mathcal{M} $ with cutpoint
$ \lambda \in \mathbb{R} $ is defined as
\[
      L = \{ w \in \Sigma^{*} \mid f_{\mathcal{M}}^{a}(w) > \lambda \}.
\]

Machines that recognize their languages with  cutpoint 0 are said to be nondeterministic.\footnote{This property, ensuring that strings not in the language are guaranteed to be rejected with probability 1, is also called perfect soundness.} 

A language $L$ is said to be recognized with one-sided error bound $\epsilon$ if there exists a nondeterministic machine that recognizes $L$, accepting all members of $L$ with probability at least $1 - \epsilon $, where $\epsilon$ is a real constant such that $ 0 \le \epsilon < 1 $.
The language $ L \subseteq \Sigma^{*} $ is said to be recognized by
machine $ \mathcal{M} $ with (two-sided) error bound $ \epsilon $
($ 0 \le \epsilon < \frac{1}{2} $) if
\begin{itemize}
      \item $ f_{\mathcal{M}}^{a}(w) \ge 1 - \epsilon $ when $ w \in L $,
      \item $ f_{\mathcal{M}}^{r}(w) \ge 1 - \epsilon $ when $ w \notin L $.
\end{itemize}
This situation is also known as recognition with bounded error.

For a given string $ w $, $ |w| $ denotes the length of $ w $, 
$ |w|_{\sigma} $ is the number of occurrences of symbol $ \sigma $ in $ w $,
and $ w_{i} $ denotes the $ i^{th} $ symbol of $ w $.
	
\section{Computational Power of Quantum One-Counter Automata} \label{section:main-results}

We start by showing that our sufficiently general definition allows
rtQ1CAs to simulate rtP1CAs easily.

\begin{theorem}
      \label{theorem:rtP1CA-rtQ1CA}
      For any rtP1CA $ \mathcal{P} $, there exists a simple rtQ1CA $
\mathcal{M} $,
      such that for all $ w \in \Sigma^{*} $,
      \[
              f_{\mathcal{P}}^{a}(w) = f_{\mathcal{M}}^{a}(w).
      \]
\end{theorem}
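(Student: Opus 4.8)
The plan is to first invoke Fact~\ref{fact:simpp} to reduce to the case where $\mathcal{P}$ is a \emph{simple} rtP1CA, specified by left stochastic matrices $A_{\sigma,\theta}$ (columns summing to $1$) together with a counter-update function $D_c^{\mathcal{P}}\colon Q\times\tilde\Sigma\to\lozenge$. The key observation is that a superoperator built from \emph{rank-one} Kraus operators is nothing but a ``measure-and-forget" operation, i.e.\ a purely classical random choice followed by a deterministic move; hence a simple rtQ1CA whose every superoperator has this form can be made to carry, at every step, an honest probability distribution over single configurations, and we will arrange for that distribution to coincide with $\mathcal{P}$'s.

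Concretely, I would let $\mathcal{M}$ have the same $Q$, $\Sigma$, $q_1$ and the same counter-update function $D_c=D_c^{\mathcal{P}}$, take register alphabet $\Omega=Q\times Q$ (the initial symbol $\omega_1$ is immaterial, since the first transition overwrites it), set $\Omega_a=\{(q_i,q_j)\mid q_j\in Q_a\}$, and for each $\sigma\in\tilde\Sigma$, $\theta\in\Theta$ define $E_{\sigma,\theta,(q_i,q_j)}$ to be the matrix whose only nonzero entry is $\sqrt{A_{\sigma,\theta}[j,i]}$ in row $j$, column $i$ --- equivalently, $\delta(q_i,\sigma,\theta,q_j,(q_i,q_j))=\sqrt{A_{\sigma,\theta}[j,i]}$ and every other amplitude is $0$. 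Then there are three things to verify. (i) Admissibility: $\sum_{(q_i,q_j)}E_{\sigma,\theta,(q_i,q_j)}^{\dagger}E_{\sigma,\theta,(q_i,q_j)}$ is diagonal with $(i,i)$-entry $\sum_j A_{\sigma,\theta}[j,i]=1$, so it equals $I$, which is exactly the well-formedness condition for simple rtQ1CAs. (ii) Simulation invariant: by induction on the number of processed symbols, the global state of $\mathcal{M}$ is a probability distribution over single configurations $(q,k)\in Q\times\mathbb{Z}$ that equals the distribution of $\mathcal{P}$; the base case is the configuration $(q_1,0)$, and for the inductive step a branch in configuration $(q_i,k)$ reading $\sigma$ with counter sign $\theta=\mathrm{sgn}(k)$ splits --- because each $E_{\sigma,\theta,(q_i,q_j)}$ is rank one --- into branches indexed by $j$, the $j$-th being the single configuration $(q_j,k+D_c(q_j,\sigma))$ with relative weight $A_{\sigma,\theta}[j,i]$, which is precisely the move $\mathcal{P}$ makes (branches that happen to share a configuration merely add their weights, so the state stays diagonal). (iii) Acceptance: after reading $\dollar$, a final branch ends with register symbol $(q_i,q_j)$ where $q_j$ is the state $\mathcal{P}$ would occupy, the final projective measurement reports ``a" exactly for $q_j\in Q_a$, and summing those weights over all final counter values (legitimate since a rt1CA's acceptance depends only on the state reached after $\dollar$) gives $\sum_{q_j\in Q_a}\Pr_{\mathcal{P}}[\mathcal{P}\text{ ends in state }q_j]=f^{a}_{\mathcal{P}}(w)$, which is therefore $f^{a}_{\mathcal{M}}(w)$.

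I do not expect a genuine obstacle: the construction is routine once one uses the register to record the pair (source state, target state) of each transition. The one subtlety worth flagging is why one must use \emph{rank-one} Kraus operators, and not merely some operators reproducing the correct diagonal --- a coarser encoding could leave spurious off-diagonal coherences in the configuration density matrix, which a later transition could turn into genuine quantum interference and so break the exact agreement with the classical computation; with rank-one operators, and because the counter update $D_c(q_j,\sigma)$ is a deterministic function of the target state and the scanned symbol, the state provably stays diagonal throughout. A minor bookkeeping remark is that the governing matrix $A_{\sigma,\theta}$ depends on the counter sign and hence differs from branch to branch, but since the superoperator acts configuration-by-configuration this needs no special treatment.
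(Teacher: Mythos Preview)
Your proposal is correct and takes essentially the same approach as the paper: reduce to a simple rtP1CA via Fact~\ref{fact:simpp}, take $\Omega\cong Q\times Q$, and set $E_{\sigma,\theta,(q_i,q_j)}$ to be the rank-one matrix with $\sqrt{A_{\sigma,\theta}[j,i]}$ in position $(j,i)$, with $\Omega_a$ consisting of those pairs whose second component is accepting. Your write-up is in fact more explicit than the paper's, which simply asserts that ``$\mathcal{M}$ traces precisely the same computational paths with the same probabilities as $\mathcal{P}$'' without spelling out the diagonal-state invariant or the role of the rank-one structure.
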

\begin{proof}
      Assume that $ \mathcal{P} $ is a simple rtP1CA with transition matrices $A_{\sigma,\theta}$ by Fact \ref{fact:simpp}. The state set $Q$ and initial state of $ \mathcal{M} $ will be identical to those of $
\mathcal{P} $. $\Omega$ contains a symbol $
\omega_{i,j} $ for each $ 1 \le i,j \le |Q| $. Any one of these can be designated to be the initial register symbol.
      The admissible operators specifying the transitions of $ \mathcal{M} $ are constructed as follows: For each $ \sigma \in
\tilde{\Sigma} $, $ \theta \in \Theta $, and $ \omega_{i,j} \in \Omega $,
      \[
              E_{\sigma,\theta,\omega_{i,j}}[j,i] = \sqrt{A_{\sigma,\theta}[j,i]}, 
      \]
     and all remaining entries of $ E_{\sigma,\theta,\omega_{i,j}} $ are zero. 
     
Since the entries in the columns of the $A_{\sigma,\theta}$ sum up to 1, it is easy to see that all the diagonal entries of $
\sum_{\omega \in \Omega} E_{\sigma,\theta,\omega}^{\dagger}
E_{\sigma,\theta,\omega}$ are 1's, and therefore     $ \mathcal{M} $ is well-formed.

     $\Omega_{a}$ is specified as the set of all symbols $
\omega_{i,j} $, where $q_{j}$ is an accept state of $ \mathcal{P} $. 

      $ \mathcal{M} $ traces precisely the same computational paths with the same probabilities as $ \mathcal{P} $, and accepts whenever $ \mathcal{P} $ accepts.
\end{proof}

We now turn to the question of finding a case where rtQ1CAs outperform rtP1CAs.\footnote{Nakanishi \textit{et al.} \cite{NHK06} demonstrate such a superiority of quantum over classical in the case of pushdown automata, which are natural generalizations of rt1CAs, however, the machine they construct is not real-time. Furthermore, it uses multiple stack symbols, and is therefore not a counter automaton.} Let us review one past attempt to solve this problem.

Let $ \Sigma=\{a,b,0,1,2,3,4,5\} $, $ \Sigma_{a}=\{a,0,1,2\} $, $ \Sigma_{b} = \{b,3,4,5\} $.
We define a homomorphism  $ h_{\Sigma^{\prime}} $ from $ \Sigma^{*} $
to $ (\Sigma^{\prime})^{*} $ as
\begin{itemize}
      \item $ h_{\Sigma^{\prime}}(\sigma) = \sigma $ if $ \sigma \in \Sigma^{\prime} $, and
      \item $ h_{\Sigma^{\prime}}(\sigma) = \varepsilon $ if $ \sigma \in \Sigma \setminus
\Sigma^{\prime} $, where $ \varepsilon $ is the empty string.
\end{itemize}

Consider the languages $ L_{1} $ and $ L_{2} $, defined in Figures \ref{figure:L-1} and \ref{figure:L-2}. In \cite{BFK01}, Bonner \textit{et al.} claim that 
 $ L_{2} $
 cannot be recognized with bounded error by any
rtP1CA. They go on to demonstrate a KQ1CA that recognizes $ L_{2} $ with
error bound $ \frac{1}{4} $. Together with our simulation result
(Theorem \ref{theorem:rtP1CA-rtQ1CA}),
this would establish the superiority of
rtQ1CAs over rtP1CAs, however, the argument in \cite{BFK01} about $ L_{2} $'s
unrecognizability by rtP1CAs is unfortunately flawed. A brief
description of this problem follows:

\begin{figure}[here]      
      \fbox{
      \begin{minipage}{0.92\textwidth}
              \[
                      L_{1} = ( L_{a,1} \cap L_{b,2} ) \cup ( L_{a,2}
\cap L_{b,1} ),
              \]
              where
              \scriptsize
              \begin{eqnarray*}
                      L_{a,1} & = & \{ w \in \Sigma^{*} \mid
h_{\Sigma_{a}}(w)=xay, x \in
\{0,1\}^{*}, y \in \{2\}^{*},
                      | h_{\{0\}}(x) | = | h_{\{1\}}(x) | \}
                      \\
                      L_{a,2} & = & \{ w \in \Sigma^{*} \mid
h_{\Sigma_{a}}(w)=xay, x \in
\{0,1\}^{*}, y \in \{2\}^{+},
                      | h_{\{0\}}(x) | = | h_{\{1\}}(x) | + |y| \}
                      \\
                      L_{a,3} & = & \{ \Sigma^{*} \setminus ( L_{a,1}
\cup L_{a,2} ) \}
                      \\
                      L_{b,1} & = & \{ w \in \Sigma^{*} \mid
h_{\Sigma_{b}}(w)=xby, x \in
\{3,4\}^{*}, y \in \{5\}^{*},
                      | h_{\{3\}}(x) | = | h_{\{4\}}(x) | \}
                      \\
                      L_{b,2} & = & \{ w \in \Sigma^{*} \mid
h_{\Sigma_{b}}(w)=xby, x \in
\{3,4\}^{*}, y \in \{5\}^{+},
                      | h_{\{3\}}(x) | = | h_{\{4\}}(x) | + |y| \}
                      \\
                      L_{b,3} & = & \{ \Sigma^{*} \setminus ( L_{b,1}
\cup L_{b,2} ) \}
              \end{eqnarray*}
              \normalsize
      \end{minipage}
      }
      \caption{The description of $ L_{1} $}
      \label{figure:L-1}
\end{figure}

\begin{figure}[here]      
      \fbox{
      \begin{minipage}{0.92\textwidth}
              $ L_{2} $, defined over the alphabet $ \Sigma \cup
\{c,d,e\} $, is the set of all strings of the form
              \[
                      w_{1} ~ c ~ u_{1} ~ w_{2} ~ c ~ u_{2} ~ w_{3} ~
c ~ u_{3} \cdots
w_{n} ~ c ~ u_{n},
              \]
              where $ n \in \mathbb{Z}^{+} $, $ w_{1}, w_{2}, \ldots, w_{n} \in
L_{1} $ (Figure 1), and, for $ 1 \le i \le n $,
              \begin{itemize}
                      \item $ u_{i} = d $ if $ w_{i}  \in ( L_{a,1}
\cap L_{b,2} ),  $
                      \item $ u_{i} = e $ if $ w_{i}  \in ( L_{a,2}
\cap L_{b,1} ).  $
              \end{itemize}
      \end{minipage}
      }
      \caption{The description of $ L_{2} $}
      \label{figure:L-2}
\end{figure}

Bonner \textit{et al.} start by proving that no deterministic rt1CA can
recognize the language $ L_{1} $, and note  that $ L_{1} $ can therefore not be
recognized with zero error by a rtP1CA either.
They then state that
\\\\
\indent
\begin{minipage}{0.9\textwidth}
\textit{``The impossibility of recognizing $ L_{2} $ by a
probabilistic automaton with a bounded
error now follows, since the subwords $ w_{i} \in L_{1} $  of a word $
w \in L_{2} $ can be taken in
arbitrarily large numbers, and every $ w_{i} $ is processed with a
positive error."}
\end{minipage}
\\\\
The problem with this argument is that it does not consider the
possibility that there might exist a rtP1CA that recognizes $ L_{1} $ with
negative one-sided error, e.g. a machine $ \mathcal{M} $ that responds to input
strings $ w $ according to the following schema:
If $ w \in L_{1}  $, $ \mathcal{M} $ accepts $ w $ with probability 1.
If $ w\notin L_{1}  $, $ \mathcal{M} $ accepts $ w $ with probability at most
$ \frac{1}{4} $, and rejects $ w $ with
probability at least $ \frac{3}{4} $.

Such a machine would go past the argument in \cite{BFK01}, so presently there is no working proof of the unrecognizability of  $L_{2} $  by rtP1CAs. Note that the discovery of a negative answer to
\begin{openproblem}
      Is the complement of $L_{1}$ context-free?
\end{openproblem}
would fix this issue.
 
Interestingly, it is easy to demonstrate a superiority of quantum over probabilistic for the following restricted version of the
counter automaton model, which may strike the reader as being somewhat silly at the first sight.
A real-time  increment-only counter automaton (rtIOCA) is a rt1CA whose program neither checks nor decrements the value of the counter. As expected, the probabilistic versions of these machines (rtPIOCAs) are equivalent in power to probabilistic finite automata. Using quantum effects that will also be utilized in our algorithms in this paper, Yakary{\i}lmaz \textit{et al.} \cite{YFSA11A} have proven the following facts about quantum real-time increment-only counter automata (rtQIOCAs):
\begin{fact}	
	The class of languages recognized with bounded error by rtQIOCAs contains all languages recognized with 
	bounded error by standard rtQ1CAs with blind counter. These include $ L_{eq} = \{ w \in \{a,b\}^{*} \mid | w |_{a} = | w |_{b} \} $.
\end{fact}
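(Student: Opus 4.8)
The statement bundles two claims: a containment of language classes, and membership of $L_{eq}$ in the smaller one. The second is easy and I would settle it first --- the deterministic blind-counter machine that adds $1$ to its counter on each $a$, subtracts $1$ on each $b$, and accepts iff the counter is $0$ at the right end-marker recognizes $L_{eq}$ with zero error, and a deterministic machine is in particular a well-formed quantum one --- so the real work is the containment, and there it suffices to preserve recognition with \emph{some} two-sided error below $\tfrac12$, not the original one.

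So let $M$ be a blind-counter rtQ1CA recognizing $L$ with error $\epsilon_M<\tfrac12$: its counter moves by amounts in $\{-1,0,+1\}$, it never consults the counter, and it accepts a computation exactly when the register holds an accepting symbol \emph{and} the net change of the counter is $0$. The plan is to build an rtQIOCA $M'$ that runs $M$'s finite-control/register dynamics --- admissible operators realized through the register-reset mechanism, exactly as in the paper's earlier simulation --- while replacing the forbidden test ``net counter change $=0$'' by a quantum \emph{reconvergence} gadget built on $M'$'s (increment-only, never-read) counter. The gadget keeps a family of $k$ equally weighted ``test branches'', tagged by their index in the finite control; test branch $t$ raises $M'$'s counter by a fixed online weighting $g_t$ of the steps, the weightings being designed so that on any computation in which $M$'s net counter change is $0$ all the numbers $g_t$ collapse to a common value, whereas on a computation with nonzero net change they are forced to be spread apart. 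In the first case the $k$ branches end in configurations that agree on the counter and differ only in the index, hence form the uniform superposition over indices, and a single unitary on the index at the $\dollar$-step rotates that superposition onto an accepting state, recombined with $M$'s register verdict; in the second case the branches sit in pairwise orthogonal configurations (distinct counter values), the index unitary cannot reunite them, and only a $\sum_v(\text{fraction of branches at value }v)^2$ part of the branch amplitude reaches an accepting state while everything else is steered to rejection. Tracking the masses through this, members of $L$ are accepted with probability at least $1-\epsilon_M$ and non-members with probability at most $\epsilon_M$ plus this worst-case ``leakage''.

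The heart of the matter --- and the step I expect to be hardest --- is making the leakage small enough, i.e.\ below $\tfrac12-\epsilon_M$. The two natural weightings, ``count the $+1$-moves of $M$'' and ``count the $-1$-moves'', do collapse to a common value exactly on the zero-net-change computations, but off them they only ever produce two distinct values, so the leakage is exactly $\tfrac12$, useless when $\epsilon_M$ is close to $\tfrac12$. One therefore needs a larger family of weightings that stays degenerate on every zero-net-change computation yet splits into a growing number of groups on every other one (pushing the leakage to $O(1/k)$), and one must also verify that the gadget's branch-dependent increments do not disturb $M$'s own interference pattern --- which forces careful bookkeeping, possibly a preliminary normal form for $M$, and a deterministic finite-state pre-filter for the finitely many short inputs on which the weightings might accidentally coincide. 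Carrying this out is the technical core, done in \cite{YFSA11A}; applying the resulting containment to the zero-error machine for $L_{eq}$ from the first paragraph then yields the final clause of the statement.
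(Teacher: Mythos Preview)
The paper does not prove this Fact; it is quoted from \cite{YFSA11A} without argument. Your proposal likewise defers the ``technical core'' to that same reference, so at the level of the present paper there is nothing to compare --- both simply cite the result.

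Your sketch itself has the right flavor (QFT-based reconvergence on an increment-only counter), and you correctly isolate the two hard points. But the first obstacle is more stubborn than you allow. In an rtQIOCA the per-step increment lies in $\{0,+1\}$, so a weighting that depends only on $M$'s counter move $c\in\{-1,0,+1\}$ is a function $f:\{-1,0,+1\}\to\{0,1\}$; two such functions agree on \emph{every} zero-net computation iff they share $f(0)$ and $f(+1)+f(-1)$, and under that constraint they differ at all only if $f(-1)\neq g(-1)$, which forces $f(+1)+f(-1)=1$ and leaves exactly the pair ``count the $+1$-moves'' and ``count the $-1$-moves'' that you already have. A larger family with the required degeneracy therefore does not exist at this level of generality, and the leakage is stuck at $\tfrac12$. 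The interference-preservation issue you flag is also real: two histories of $M$ ending in the same configuration $(q,k)$ can have different $(n_+,n_-)$, so your $M'$ would place them in orthogonal configurations and destroy interference that $M$ relies on. Whatever construction \cite{YFSA11A} actually uses must resolve both points, and you should not assume it follows the outline you give.
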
	
\begin{fact}
	Any language recognized by a nondeterministic real-time automaton with one blind counter is 
	recognized by a  rtQIOCA with cutpoint $ \frac{1}{2} $.
\end{fact}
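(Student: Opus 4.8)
\medskip
\noindent\textbf{Proof proposal.}
The plan is to simulate the given nondeterministic real-time automaton with one blind counter --- call it $\mathcal{N}$ --- by a rtQIOCA $\mathcal{M}$, converting the acceptance condition of $\mathcal{N}$ (``halt in an accepting state with the counter equal to $0$'') into a quantum interference condition that $\mathcal{M}$ can detect, and then wrapping the result in the standard construction that turns a positive acceptance amplitude into an acceptance probability exceeding $\frac{1}{2}$. Recall that $\mathcal{N}$ accepts $w$ iff some computation of $\mathcal{N}$ on $w$ ends in an accepting state with zero net counter change; the difficulty is that $\mathcal{M}$, whose counter is increment-only and never read, cannot imitate the final zero-counter test directly.

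The central device is a way for $\mathcal{M}$ to test, using only an increment-only counter, whether a \emph{fixed} computation has zero net counter change. $\mathcal{M}$ keeps in its finite register a sign bit prepared in the state $\frac{1}{\sqrt 2}(\ket{0}+\ket{1})$; whenever $\mathcal{N}$ would increment its counter, $\mathcal{M}$ increments its own counter in the $\ket{0}$ component only, and whenever $\mathcal{N}$ would decrement, $\mathcal{M}$ increments its counter in the $\ket{1}$ component only. After the whole input has been read, along a computation with $I$ increments and $D$ decrements the $\ket{0}$ component carries counter value $I$ and the $\ket{1}$ component carries counter value $D$, so the two components occupy the \emph{same} configuration precisely when $I=D$. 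A Hadamard-type operation on the sign bit, applied just before the final register measurement, then interferes the two components constructively into a single designated configuration exactly when $I=D$, whereas for $I\neq D$ it merely splits the amplitude between two distinguishable configurations, half toward acceptance and half toward rejection. Superimposing this gadget over $\mathcal{N}$'s nondeterministic choices --- having first padded $\mathcal{N}$ with redundant, counter-neutral nondeterministic choices so that each accepting zero-counter computation of $\mathcal{N}$ is backed by exponentially many identical computation branches of $\mathcal{M}$ that reinforce one another --- yields a machine in which an accepting zero-counter computation of $\mathcal{N}$ deposits a fixed positive amplitude into an accepting configuration. Composing this verifier with a quantum coin (one outcome accepting outright, the other running the verifier), with the coin amplitudes chosen appropriately, then gives $f_{\mathcal{M}}^{a}(w)>\frac12$ exactly when that positive deposit is present, i.e.\ exactly when $w\in L$.

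The main obstacle is making this threshold exact. Because $\mathcal{M}$ cannot inspect its counter, it cannot discard the \emph{spurious} computations of $\mathcal{N}$ --- those reaching an accepting state but with nonzero net counter change --- and there can be exponentially many such computations landing in only polynomially many configurations of $\mathcal{M}$, so a priori their amplitudes might reinforce one another and push the acceptance probability up to $\frac12$ or beyond even when $w\notin L$. The crux of the proof is therefore to arrange the gadget so that each spurious computation's amplitude is split evenly --- and with phases chosen to forbid adversarial reinforcement --- between accepting and rejecting configurations in a way that survives this interference, so that the spurious computations together contribute strictly less than $\frac12$ when no genuine witness exists, while a single genuine zero-counter accepting computation always tips the total past $\frac12$. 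Pinning down the right recombination unitary, the right redundancy padding, and a suitable normal form for $\mathcal{N}$ under which this delicate balance provably holds is where the real work lies.
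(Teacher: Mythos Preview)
The paper does not prove this statement; it is stated as a fact and attributed to \cite{YFSA11A}. So there is no in-paper proof to compare against, and I will assess your proposal on its own.

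Your plan contains a genuine gap, and it is exactly at the place you flag as ``where the real work lies.'' The sign-bit gadget you describe is the right tool for the \emph{deterministic} problem of testing whether a single, fixed computation has $I=D$: with the sign qubit prepared in $(\ket{0}+\ket{1})/\sqrt{2}$, conditional increments, and a final Hadamard, the two halves of that one path interfere constructively precisely when $I=D$. But once you superimpose $\mathcal{N}$'s nondeterministic branches on top of this gadget, the Hadamard no longer compares $I$ and $D$ \emph{along the same path}. Concretely, after processing $w$ the amplitude in $\ket{q,0,k}$ is $(1/\sqrt{2})\sum_{p:\,\mathrm{state}(p)=q,\,I(p)=k}\alpha_p$ and the amplitude in $\ket{q,1,k}$ is $(1/\sqrt{2})\sum_{p:\,\mathrm{state}(p)=q,\,D(p)=k}\alpha_p$; the post-Hadamard cross term that pushes the acceptance probability above $\tfrac12$ is $\sum_{q,k} a_{q,k}b_{q,k}$, which is positive whenever there exist two (possibly different) paths $p_1,p_2$ ending in the same accepting state with $I(p_1)=D(p_2)$. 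This is strictly weaker than ``some single path has $I=D$.'' A two-state example makes this concrete: take $\mathcal{N}$ with one accepting state $q$ and, on each input symbol, a nondeterministic choice between ``increment'' and ``decrement''; on input $aa$ the paths $(+,+)$ and $(-,-)$ land in $q$ with $(I,D)=(2,0)$ and $(0,2)$, neither a zero-counter witness, yet $a_{q,0}b_{q,0}>0$ and $a_{q,2}b_{q,2}>0$, so your machine accepts with probability strictly above $\tfrac12$ even though $\mathcal{N}$ rejects.

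Your proposed remedies do not close this gap. ``Redundancy padding'' multiplies \emph{all} paths, spurious and genuine alike, and so cannot separate the two-path coincidences above from true witnesses; and there is no normal form for $\mathcal{N}$ that would prevent distinct nondeterministic paths from sharing a final state while having $I(p_1)=D(p_2)$. The underlying obstruction is that an increment-only counter can take at most $|w|+1$ values, far fewer than the number of paths, so you cannot tag paths by counter value to keep them orthogonal through the final Hadamard. A correct argument must therefore avoid the per-path $I$ versus $D$ comparison altogether --- for instance by first producing a quantum blind-counter machine whose acceptance amplitude already aggregates the path count at counter value $0$ (so that the nondeterminism is resolved \emph{before} the blind counter is replaced by an increment-only one), and only then applying the two-branch interference trick to that single aggregated amplitude. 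As written, your construction tests the wrong predicate, and the outline gives no mechanism to repair it.
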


\begin{fact}\label{fact:nh}
	Any language recognized by a deterministic 1-reversal rt1CA\footnote{A 1-reversal rt1CA is a machine that is not allowed to increment its counter again after performing the first decrement.} is recognized by a  rtQIOCA with cutpoint $ \frac{1}{2} $. For instance, 	
	\small
      \[ L_{NH} = \{a^{x}ba^{y_{1}}ba^{y_{2}}b \cdots a^{y_{t}}b \mid
x,t,y_{1}, \cdots, y_{t}
      \in \mathbb{Z}^{+} \mbox{ and } \exists k ~ (1 \le k \le t),
x=\sum_{i=1}^{k}y_{i} \},\]
      \normalsize
      over the alphabet $ \{a,b \} $, is such a language.
\end{fact}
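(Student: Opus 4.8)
The plan is to reduce this to the preceding Fact: it suffices to show that the language recognized by a deterministic $1$-reversal rt1CA $\mathcal{D}$ is also recognized by a \emph{nondeterministic} real-time automaton $\mathcal{B}$ equipped with a single \emph{blind} counter. The structural fact that makes this possible is that, because $\mathcal{D}$ is $1$-reversal, the counter value along its unique run is unimodal: it weakly increases during an initial ``phase~1'', attains a maximum $M$ (the total number of increments), and then weakly decreases during ``phase~2''. Throughout phase~1 the counter equals $0$ before the first increment and is positive afterwards, so the sign that $\mathcal{D}$ senses there is finite-state information; thus $\mathcal{B}$ can reproduce $\mathcal{D}$'s phase-1 run exactly in its finite control while pumping its blind counter up to $M$. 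The only part of $\mathcal{D}$'s behavior that is genuinely not finite-state lies in phase~2: whether $\mathcal{D}$'s counter ever reaches $0$ there --- equivalently, whether $M$ exceeds the number $N$ of decrements executed in phase~2 --- and, if it does, which decrement is the $M$-th (the point where the sensed sign turns from ``$+$'' to ``$0$'', after which it turns to ``$-$'' at the next decrement).

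At the start, $\mathcal{B}$ therefore branches nondeterministically into two options. In Option~A it bets $M>N$ and simulates phase~2 of $\mathcal{D}$ with the sensed sign frozen to ``$+$'', which is the correct behavior exactly under this bet; to certify $M>N$ using a blind counter (which can test only equality to $0$) it employs the standard trick of cancelling a nondeterministically chosen \emph{nonempty} subset of its phase-1 increments against the counter, so that the counter can be driven to $0$ at the end of the input precisely when $M-N\ge 1$. In Option~B it bets $M\le N$: it decrements the blind counter on every phase-2 decrement of $\mathcal{D}$ and, at some such decrement, nondeterministically declares ``the counter is now $0$'', after which it leaves the blind counter untouched and simulates the sensed sign as ``$0$'', switching to ``$-$'' once $\mathcal{D}$ performs its next decrement --- all finite-state from the declaration on; a run that never declares does not accept. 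In either option $\mathcal{B}$ accepts iff the simulated final state of $\mathcal{D}$ is accepting \emph{and} the blind counter reads $0$.

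For correctness, observe that when $M>N$ only Option~A can end with the counter at $0$, and then the ``$+$''-frozen simulation it performed is exactly $\mathcal{D}$'s true phase-2 run, so $\mathcal{B}$ accepts iff $\mathcal{D}$ does; when $M\le N$ only Option~B can end with the counter at $0$, and only if its declaration fell on the genuine $M$-th decrement, in which case its simulation again coincides with $\mathcal{D}$'s run. The degenerate situations --- $\mathcal{D}$ never decrements, or $M=0$ --- leave the language of $\mathcal{D}$ regular and are handled without using the counter. Invoking the preceding Fact on $\mathcal{B}$ then yields a rtQIOCA recognizing the language of $\mathcal{D}$ with cutpoint $\frac{1}{2}$, as required. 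For the example: $L_{NH}$ is recognized by the deterministic $1$-reversal rt1CA that increments over the initial block $a^{x}$, decrements once for every later $a$, and accepts iff its counter is $0$ at some block-boundary $b$ (recorded with a single state bit); equivalently, a nondeterministic real-time blind-counter machine guesses $k$, increments over $a^{x}$, decrements over $a^{y_{1}}\cdots a^{y_{k}}$, does nothing afterwards, and accepts iff the counter finishes at $0$.

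The step I expect to be the main obstacle is squaring the $M>N$ case with the blind-counter discipline: a blind counter certifies equalities, not strict inequalities, so one must both implement the ``$M$ strictly exceeds $N$'' test and ensure that the Option~A branch cannot spuriously accept when $M\le N$, a regime in which freezing the sensed sign to ``$+$'' would misrepresent $\mathcal{D}$'s run. Keeping Options~A and~B mutually exclusive \emph{through the final value of the counter} is precisely what makes the reduction sound; the rest is routine finite-state bookkeeping of $\mathcal{D}$'s three sign-regimes and of where the reversal occurs.
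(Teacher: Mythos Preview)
The paper does not supply its own proof of this statement; it is quoted from \cite{YFSA11A} as a Fact. So there is nothing in the present paper to compare against directly.

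Your reduction through the preceding Fact (nondeterministic real-time blind-counter automaton $\Rightarrow$ rtQIOCA with cutpoint $\frac{1}{2}$) is the natural route and is sound. Two small points are worth tightening. First, in this paper $\Theta=\{0,\pm\}$ is two-valued (zero versus nonzero), so your ``$+\to 0\to -$'' narrative should read ``$\pm\to 0\to \pm$''; the argument is unchanged, since after the guessed zero your Option~B simulation correctly reverts to ``$\pm$'' at the next decrement. Second, you should take ``$1$-reversal'' syntactically (state set split into an incrementing phase and a decrementing phase, which is without loss of generality), so that the $\pm$-frozen phase-2 simulation used in Option~A can never increment; otherwise the decrement count you compare to $M$ is not obviously well-defined on inputs where the frozen simulation diverges from $\mathcal{D}$'s true run.

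The key exclusivity claim is correct once phrased carefully. Writing $N_{\pm}$ for the number of decrements in the $\pm$-frozen phase-2 simulation, Option~A's blind counter ends at $(M-|S|)-N_{\pm}$ with $|S|\ge 1$, which can vanish only when $N_{\pm}<M$; in that regime the frozen simulation coincides with $\mathcal{D}$'s actual run, so the accept test is faithful. When $N_{\pm}\ge M$, Option~A cannot zero its counter, while Option~B has the guess $k=M$ available (the first $M$ decrements occur identically in the actual run and in the $\pm$-frozen one), and that unique zeroing guess makes the subsequent $0/\pm$ simulation exact. One further wrinkle: your ``degenerate $M=0$'' is a property of the input, not of $\mathcal{D}$, but $\mathcal{B}$ can detect it in finite control (``has $\mathcal{D}$ ever incremented?'') and then simulate the sign sequence $0,\ldots,0,\pm,\ldots$ without touching the counter. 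The treatment of $L_{NH}$ is fine.
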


Since $ L_{eq} $ is nonregular, no rtPIOCA can recognize it with bounded error. 
Similarly, $ L_{NH} $ is known \cite{NH71} to be 
unrecognizable with cutpoint by (even two-way \cite{Ka91}) 
probabilistic finite automata,\footnote{While reporting some results of his investigation \cite{Ra92} of the 
relationship between the
classes of languages recognized by reversal-bounded deterministic
one-way multicounter machines and bounded-error two-way
probabilistic finite automata (2-PFA), Ravikumar reported, among
others, the following problem as being open:

``Are all languages accepted by 1-way deterministic 1-reversal counter
machines in 2-PFA?"

By virtue of Fact \ref{fact:nh}, this question is answered negatively.} and is therefore unrecognizable by rtPIOCAs and real-time quantum finite automata \cite{YS11A} as well.

\subsection{Superiority of rtQ1CAs over rtP1CAs} \label{section:rtsuperiority}
It is now time to present our main result that unrestricted rtQ1CAs 
are indeed more powerful than their probabilistic counterparts. 

Consider the language
\[
	L_{3} = \{ w \in \{a,b,c\}^{*} \mid |w|_{a} \neq |w|_{b} \mbox{ and }  
	( |w|_{a} = |w|_{c} \mbox{ or } |w|_{b} = |w|_{c} ) \}.
\]

$ L_{3}$ is not context-free, which can be shown easily using the pumping lemma on the string $a^{p!+p}b^{p}c^{p!+p}$. This means that no probabilistic pushdown automaton (and therefore no rtP1CA) can recognize $ L_{3}$ with one-sided error. 

\begin{theorem}\label{theorem:abuth-unordered}
    There exists a KQ1CA with blind counter that recognizes $ L_{3} $ with one-sided
	error bound $ \frac{3}{4} $.
\end{theorem}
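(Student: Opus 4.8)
The plan is to design a KQ1CA with blind counter that, for a given input $w$, accepts with probability $1$ when $w \in L_3$ and with probability at most $\frac{3}{4}$ when $w \notin L_3$. The key observation is that a blind counter, together with the ability to reject a branch by ending the computation in a non-accepting register state, lets the machine verify a single equality of the form $|w|_x = |w|_y$ along a computational branch: the branch increments the counter by $+1$ on each $x$ and $-1$ on each $y$ (and $0$ otherwise), and the branch's contribution is counted as accepting only if the counter is zero at the end. However, $L_3$ requires a \emph{conjunction} of an inequality ($|w|_a \neq |w|_b$) with a \emph{disjunction} ($|w|_a = |w|_c$ or $|w|_b = |w|_c$), and the inequality is the genuinely non-classical part — it is exactly the kind of ``$\neq$'' condition that QFAs can handle with one-sided error via a two-branch interference trick but PFAs cannot. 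First I would set up a superposition over a constant number of branches (on reading $\Cent$, apply a fixed unitary to a small register putting the computation into an equal-weight superposition of, say, four branches), where the branches are grouped to test the three relevant equalities: one branch-pair implements the ``$|w|_a \neq |w|_b$'' test, and the remaining branches test ``$|w|_a = |w|_c$'' and ``$|w|_b = |w|_c$'' using the blind counter as above.

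The core gadget for ``$|w|_a \neq |w|_b$'' is the standard QFA construction: use two branches $B_1, B_2$ that each rotate a qubit-like register by angles depending on $|w|_a$ and $|w|_b$ in such a way that, after reading $\dollar$, a final unitary causes the two branches to interfere; the amplitudes cancel exactly in the ``accept'' register symbol iff $|w|_a = |w|_b$, so the acceptance probability from this gadget is $0$ when $|w|_a = |w|_b$ and strictly positive (bounded below by a constant) when $|w|_a \neq |w|_b$. The two counter-based branches $B_3, B_4$ use the blind counter to accept (i.e., write a symbol in $\Omega_a$ at the end) precisely when their branch reaches counter value zero, testing $|w|_a = |w|_c$ and $|w|_b = |w|_c$ respectively. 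I would then verify that when $w \in L_3$, at least one of $B_3, B_4$ contributes its full amplitude to acceptance and the $B_1, B_2$ gadget also contributes, pushing the total acceptance probability up to (at least) the claimed bound; and when $w \notin L_3$, either $|w|_a = |w|_b$ (so the interference gadget contributes $0$, and I must ensure the counter branches alone cannot exceed $\frac{1}{4}$ — which is where the equal-weighting of the four branches with weight $\frac14$ each is chosen) or the disjunction fails (so both counter branches contribute $0$ and only the $B_1,B_2$ gadget, appropriately capped, can contribute). Care is needed to check the well-formedness conditions \eqref{equation:wfc1}--\eqref{equation:wfc3}, but since the counter increments are determined by the current symbol ($+1$ on $a$ within the $a$-vs-$c$ branch, etc.) and the branches occupy disjoint state-subspaces, the operators decompose blockwise and admissibility reduces to checking that each block is a partial isometry.

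The main obstacle I anticipate is the bookkeeping that makes the two-sided bound come out to exactly $\frac{3}{4}$ on \emph{one-sided error} (perfect soundness): I must guarantee acceptance probability is \emph{exactly} $0$ for every $w \notin L_3$ that fails the disjunction, which forces the interference gadget for ``$|w|_a \neq |w|_b$'' to contribute $0$ on such strings too — but by definition those strings have $|w|_a \neq |w|_b$, so the gadget would contribute a positive amount and destroy soundness. This means the naive ``run all tests in parallel and accept if any succeeds'' approach fails, and the branches must instead be wired so that the inequality test's positive output is only ``released'' into $\Omega_a$ in conjunction with a successful counter test. The fix is to have the interference happen \emph{within} the counter-carrying branches: branches $B_1$ and $B_2$ both carry the $a$-vs-$c$ counter, interfere at $\dollar$ to cancel unless $|w|_a \neq |w|_b$, and write an accept symbol only if additionally the counter is zero ($|w|_a = |w|_c$); symmetrically $B_3, B_4$ carry the $b$-vs-$c$ counter and test $|w|_b \neq |w|_a$ and $|w|_b = |w|_c$. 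Then a string in $L_3$ gets acceptance probability at least $\frac14$ (from whichever of the two equalities holds, and the inequality always holds for members), while a non-member contributes $0$ from any branch-pair whose equality fails and $0$ from any branch-pair whose inequality fails (which happens exactly when $|w|_a = |w|_b$, i.e. for \emph{all} non-members of the ``$\neq$'' clause), giving perfect soundness; the worst-case acceptance probability for members is then $\frac14$, i.e. error bound $\frac34$. I would close by exhibiting the explicit state set, register alphabet, and transition amplitudes, and checking the admissibility equation directly.
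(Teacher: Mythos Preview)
Your diagnosis is correct: the naive ``run all tests in parallel and accept if any succeeds'' approach ruins soundness, and the inequality test $|w|_a \neq |w|_b$ must be wired so that its positive outcome is only released together with a successful counter test. But the fix you propose has a real gap. You say that $B_1$ and $B_2$ \emph{both} carry the $a$-versus-$c$ counter (so both end with counter value $|w|_a - |w|_c$) and then ``interfere at $\dollar$ to cancel unless $|w|_a \neq |w|_b$.'' With the single counter already committed to $|w|_a - |w|_c$, there is no remaining unbounded resource by which $B_1$ and $B_2$ can record $|w|_a - |w|_b$: a finite state set cannot count arbitrarily many $a$'s and $b$'s, and the ``rotate a qubit-like register by angles'' gadget you invoke earlier yields exact cancellation only when the angle difference is a multiple of $2\pi$, not exactly when $|w|_a = |w|_b$. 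Irrational angles would give cutpoint-$0$ recognition but not acceptance probability bounded below by $\tfrac14$ on members, so the one-sided error bound $\tfrac34$ would fail. As written, the pair $B_1,B_2$ therefore cannot implement the required interference.

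The paper's construction supplies the missing idea you are circling. It uses only \emph{two} branches, one with counter value $|w|_a - |w|_c$ and the other with $|w|_b - |w|_c$, and applies a single Hadamard-like transform at $\dollar$ that sends both branches to a common ``accept'' state $p_1$ with amplitudes $+\tfrac12$ and $-\tfrac12$. The key point is that these two accepting components occupy the \emph{same configuration} --- and hence interfere destructively --- precisely when their counter values coincide, i.e.\ precisely when $|w|_a = |w|_b$. When $|w|_a \neq |w|_b$ the two accepting components live in distinct configurations (distinguished by their counter values), do not interfere, and each has squared amplitude $\tfrac14$; whichever of them has counter value $0$ (that is, whichever of $|w|_a = |w|_c$, $|w|_b = |w|_c$ holds) then accepts. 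So the counter does double duty: its \emph{value} controls whether the branches interfere (testing $|w|_a \neq |w|_b$), and its \emph{being zero} controls whether a surviving branch actually accepts (testing an equality with $|w|_c$). Your four-branch scheme can be repaired by having each pair carry the two \emph{different} counter values rather than the same one, at which point it collapses to two redundant copies of the paper's two-branch machine.
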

\begin{proof}

We construct a KQ1CA with blind counter 
$ \mathcal{M}_{1}=(Q,\Sigma,\Omega,\delta,q_{1},\Omega_{a},\omega_{n}) $,
where $ Q = \{q_{1},p_{1}\}$, $ q_{1} $ is the initial state, $ \Sigma=\{a,b,c\}, $ 
$ \Omega=\{\omega_{n},\omega_{a},\omega_{r} \}$, and $\Omega_{a}=\{\omega_{a}\} $.
The transition function $\delta$ is depicted in Figure \ref{fig:L3-unordered}. 

\begin{figure}[h]	
	\centering
	\mbox{
	\small
	\begin{minipage}{\textwidth}		
		\[
			\begin{array}{|l|l|l|l|}
				\hline
				\multicolumn{1}{|c|}{} &
				\multicolumn{3}{c|}{ \Cent}
				\\
				\hline
				\multicolumn{1}{|c|}{} &
				\multicolumn{3}{l|}{	
					\delta(q_{1},\Cent) =\frac{1}{\sqrt{2}} (q_{1},0,\omega_{n})
					+ \frac{1}{\sqrt{2}} (p_{1},0,\omega_{n})
				}
				\\
				\hline \hline
				\multicolumn{1}{|c|}{} &
				\multicolumn{1}{c|}{a} &
				\multicolumn{1}{c|}{b} &
				\multicolumn{1}{c|}{c} 
				\\ \hline
				\mathsf{path_{1}} 
				&
					\begin{array}{rcl}
						\delta(q_{1},a) & = & (q_{1},+1,\omega_{n}) \\						
					\end{array}
				 &
					\begin{array}{rcl}
						\delta(q_{1},b) & = &  (q_{1},0,\omega_{n}) \\
					\end{array}
				&
					\begin{array}{rcl}
						\delta(q_{1},c) & = & (q_{1},-1,\omega_{n}) \\					 
					\end{array}
				\\
				\hline
				\mathsf{path_{2}} 
				&
					\begin{array}{rcl}
						\delta(p_{1},a) & = & (p_{1},0,\omega_{n}) \\						
					\end{array}
				 &
					\begin{array}{rcl}
						\delta(p_{1},b) & = &  (p_{1},+1,\omega_{n}) \\
					\end{array}
				&
					\begin{array}{rcl}
						\delta(p_{1},c) & = & (p_{1},-1,\omega_{n}) \\					 
					\end{array}
				\\				
				\hline \hline
				\multicolumn{1}{|c|}{} &
				\multicolumn{3}{c|}{ \dollar}
				\\
				\hline
				\multicolumn{1}{|c|}{\mathsf{path_{1}}} &
				\multicolumn{3}{l|}{
					\begin{array}{lcl}
						\delta(q_{1},\dollar) & = & 
							\frac{1}{\sqrt{2}} (q_{1},0,\omega_{n}) + \frac{1}{\sqrt{2}} (p_{1},0,\omega_{a})
					\end{array}
				}
				\\
				\hline
				\multicolumn{1}{|c|}{\mathsf{path_{2}}} &
				\multicolumn{3}{l|}{
					\begin{array}{lcl}
						\delta(p_{1},\dollar) & = & 
							\frac{1}{\sqrt{2}} (q_{1},0,\omega_{n}) - \frac{1}{\sqrt{2}} (p_{1},0,\omega_{a})
					\end{array}	
					
				}
				\\
				\hline
			\end{array}			
		\]
	\end{minipage}
	}
	\caption{The program of KQ1CA $\mathcal{M}_{1}$ which recognizes $L_{3}$}
	\label{fig:L3-unordered}
\end{figure}

In the figure, transitions applicable in the case where $ \mathcal{M}_{1} $ is in state $ q \in Q $ and reading
	symbol $ \sigma \in \tilde{\Sigma} $ are represented using the notation
	\[
		\delta(q,\sigma) =  \sum_{(q',c,\omega) \in Q  \times \lozenge \times \Omega }
			\delta(q,\sigma,0,q',c,\omega) (q',c,\omega).
	\]
	It is easy to see that the amplitudes of the unspecified transitions for $\delta(p_{1},\Cent)$ can be filled in so that conditions (\ref{equation:wfc1}-\ref{equation:wfc3})  are satisfied. Note that, for any combination of $q, q' \in Q$, $\sigma \in \Sigma$, $c \in \lozenge$, and $\omega \in \Omega$, $\delta(q,\sigma,0,q',c,\omega)=\delta(q,\sigma,\pm,q',c,\omega)$, since $ \mathcal{M}_{1} $ is a blind counter machine.

$\mathcal{M}_{1}$ starts by splitting into two computational paths, $ \mathsf{path}_{1} $
and $ \mathsf{path}_{2} $, with equal
                       amplitude. $ \mathsf{path}_{1} $ computes the difference $|w|_{a} - |w|_{c}$ by incrementing the counter for each $a$ that it scans, ignoring the $b$'s, and decrementing for each $c$. $ \mathsf{path}_{2} $ similarly computes  $|w|_{b} - |w|_{c}$. Upon reaching the right end-marker $\dollar$, the machine is in superposition
\[
	\frac{1}{\sqrt{2}} \ket{q_{1},(|w|_{a} - |w|_{c})} + \frac{1}{\sqrt{2}} \ket{p_{1}, (|w|_{b} - |w|_{c}) },
\]
and the two paths perform a two-way quantum Fourier transform (QFT) \cite{KW97}, described using the transitions for $\delta(q_{1},\dollar)$ and $\delta(p_{1},\dollar)$ in Figure \ref{fig:L3-unordered}, yielding the final superposition
\footnotesize
\begin{equation}
	\label{eq:L3-superposi}
	\frac{1}{2}  \ket{q_{1},(|w|_{a} - |w|_{c})} + \frac{1}{2} \ket{q_{1},(|w|_{b} - |w|_{c})} +  
	\frac{1}{2} \ket{p_{1},(|w|_{a} - |w|_{c})} - \frac{1}{2} \ket{p_{1},(|w|_{b} - |w|_{c})}.
\end{equation}
\normalsize
      If $ |w|_{a} = |w|_{b} $, the configuration with state $p_{1}$ is ``cancelled out."  Since the only configuration in the final superposition is reached by writing a register symbol not in $\Omega_{a}$, the input string $w$ is rejected with probability 1.
      If $ |w|_{a} \neq |w|_{b}  $, of the four configurations in 
(\ref{eq:L3-superposi}), the first two have written the rejecting symbol, as discussed above. The remaining two configurations have written the accepting symbol. If $|w|_{a} = |w|_{c} \mbox{ or } |w|_{b} = |w|_{c}$, exactly one of these computational paths will have its counter equal zero, and therefore accept. Otherwise, all paths will reject. An examination of the amplitudes reveals that members of $L_{3}$ are accepted with probability $ \frac{1}{4} $, and all nonmembers are rejected with probability 1.

\end{proof}

It is known that real-time quantum Turing machines with cutpoint 0 are strictly more powerful than probabilistic ones for all common sublogarithmic space bounds, but the proof of this statement in \cite{YS10A} holds only for unbounded-error recognition. By Theorems \ref{theorem:rtP1CA-rtQ1CA} and \ref{theorem:abuth-unordered}, the rtQ1CA is the first real-time quantum model that has been shown to outperform its probabilistic counterpart in the regime of recognition with one-sided error bound less than 1.
\subsection{One-Way Q1CAs} \label{section:onewaysuperiority}

Although the machine of Section \ref{section:rtsuperiority} can  be run repeatedly for a fixed number of times to achieve high correctness probability, its error bound of $\frac{3}{4}$ is still somewhat unsatisfactory, since the core model allows only one left-to-right scan of the input. We do not know of a general method that can be used to convert a given rtQ1CA to another rtQ1CA that recognizes the same language, but with a smaller error bound. In this section, we define the one-way quantum one-counter automaton, (1Q1CA) whose input head can remain stationary for some steps of the computation. We will show that 1Q1CAs outperform their probabilistic  counterparts, the 1P1CAs, with an example where the 1Q1CA can be tuned to recognize its language with any desired positive error probability.

A quantum one-way one-counter automaton (1Q1CA) is an 8-tuple
\[
	\mathcal{M} = (Q,\Sigma,\Omega,\delta,q_{1},\omega_{1},\Omega_{a},\Omega_{r}),
\]
where the new item $ \Omega_{r} \subseteq ( \Omega \setminus \Omega_{a} ) $ is the set of rejecting register symbols. We also define $\Omega_{n} = \Omega \setminus (\Omega_{a} \cup \Omega_{r})$. As will be explained shortly, the definition of the 1Q1CA transition function accommodates the possibility of the tape head pausing on the same symbol for some steps. The halting behavior of machines of this type is also somewhat different than that of rtQ1CAs. 

$\delta$ now has an additional argument that indicates the direction of the head movement associated with the transition in question. The amplitude with which a 1Q1CA that reads symbol $ \sigma \in \tilde{\Sigma} $ 
when it is in state $ q \in Q $, and when its counter
has sign $ \theta \in \Theta $, will switch to state $ q^{\prime} \in Q $,
modify the value of the counter by $ c \in \lozenge $, move the input head in direction $ d \in D$,
and write $ \omega \in \Omega $ in the finite register is
\[
       \label{equation:delta-quantum-one-way}
               \delta(q,\sigma,\theta,q^{\prime},c,d,\omega) \in \mathbb{C},
\]
where $D = \{\downarrow,\rightarrow\}$, with the arrows indicating stationary and rightward moves, respectively.

For a given input $ w \in \Sigma^{*} $,
an instantaneous configuration of 1Q1CA $ \mathcal{M} $ can be specified by the triple $ (q,x,k) $,
where $ q \in Q $, 
$ x $ is the position of the input head ($ 1 \le x \le |\tilde{w}| $),
and $ k $ is the value of the counter.
A list of local conditions for well-formedness of 1Q1CAs is given in  \ref{app:well-formedness}.

After each step of the execution of a 1Q1CA, the register is measured. This measurement has three possible outcomes, ``a", ``r", or ``n", corresponding to whether the register contains a member of $\Omega_{a}$, $\Omega_{r}$, or $\Omega_{n}$, respectively. $\omega_{1} \in \Omega_{n}$. The machine halts with the corresponding response whenever an ``a" or an ``r" is observed. Otherwise, the computation continues.

As with all quantum models defined using the modern approach \cite{YS11A}, 1Q1CAs can simulate their probabilistic counterparts easily. We proceed to our proof of the superiority of the quantum version.

Consider the language
\[
	L_{4} = \{ w \in \{a,b,c\}^{*} \mid |w|_{a} = |w|_{b} \mbox{ and } |w|_{a} \neq |w|_{c} \},
\]
which can be shown to be non-context-free, and therefore unrecognizable by any probabilistic one-way one-counter machine with one-sided error, by an argument almost identical to what we had about $L_{3}$ in Section \ref{section:rtsuperiority}.

\begin{theorem}\label{theorem:cemth}
	For any $N>1$, there exists a 1Q1CA with blind counter that recognizes $ L_{4} $ with one-sided
	error bound $ \frac{1}{N} $.
\end{theorem}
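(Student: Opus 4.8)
The plan is to extend the interference idea behind Theorem~\ref{theorem:abuth-unordered} and to exploit the ability of a $1$Q$1$CA to keep its head stationary on $\dollar$ in order to drive the acceptance probability of members of $L_4$ arbitrarily close to $1$ while preserving perfect soundness. First I would separate the two conditions defining $L_4$. The equality $|w|_a=|w|_b$ is exactly what a blind counter handles perfectly: if the machine increments on $a$, decrements on $b$, and ignores $c$, then the counter ends at $0$ if and only if $|w|_a=|w|_b$, so every string with $|w|_a\neq|w|_b$ is rejected with probability $1$ regardless of anything else the machine does. The residual task is, on the inputs where the counter ends at $0$, to reject exactly those with $|w|_a=|w|_c$.

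This residual task must be done by interference rather than by a second counter. As in the machine for $L_3$, I would split the computation at $\Cent$ into several branches carrying suitably chosen counter contents (possibly using pause steps so that a branch can add or subtract a fixed constant $>1$ per letter, and so that the branches can maintain enough auxiliary finite state), and recombine them at $\dollar$ with a quantum Fourier transform so that the accepting configurations produced by different branches coincide, and cancel, precisely on the letter-count configurations that must be rejected, with the leftover amplitude collected in configurations that write a rejecting register symbol. On top of this base mechanism I would run an amplification phase while the head pauses on $\dollar$: once the input has been consumed the only input-dependent data are whether the counter is $0$ and the finite state/register produced by the recombination, and we may apply a further sequence of QFT-type rotations interleaved with the compulsory per-step register measurements. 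The rounds are organized so that a branch which has already committed a rejecting symbol is left untouched, while the branch holding the ``maybe-accepting'' amplitude is, in each round, split into a portion declared accepting immediately and a portion rotated to be declared accepting later. After $\Theta(\log N)$ such rounds (equivalently, a single QFT over $\Theta(N)$ dimensions spread over $\Theta(N)$ pause steps) a member of $L_4$ is accepted with probability at least $1-\tfrac{1}{N}$, while every non-member is still rejected with probability $1$; the local well-formedness conditions of~\ref{app:well-formedness} would be checked stage by stage exactly as for $\mathcal{M}_1$.

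The step I expect to be the main obstacle is perfect soundness, and making it robust under the amplification phase. A single blind counter can test only one $\mathbb{Z}$-linear relation among $|w|_a,|w|_b,|w|_c$, so $|w|_a\neq|w|_c$ cannot be verified outright; it must be forced by the exact cancellation of accepting configurations whose counter contents agree on the hyperplane $|w|_a=|w|_c$, and a naive single cancellation leaves a spurious accepting configuration whose counter vanishes on a hyperplane that also meets $\{|w|_a\neq|w|_b\}$. Arranging the branch counter contents and amplitudes (and, if necessary, residue-dependent counter updates enabled by pausing) so that all the required cancellations are simultaneously exact for every forbidden configuration, and then proving that no amplification round ever leaks amplitude into an accepting configuration of a non-member because each round acts only on the finite state and register and is completely oblivious to the counter, is the delicate heart of the proof; the completeness bound $1-\tfrac{1}{N}$ is afterwards a routine estimate on the amplitudes produced by the $\Theta(\log N)$ rounds.
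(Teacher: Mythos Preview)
Your proposal has a genuine gap: you are using the one-way capability in the wrong place, and the mechanism you propose in its stead cannot carry the comparison $|w|_a$ vs.\ $|w|_c$.

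The paper does \emph{not} amplify while pausing on $\dollar$. Instead it uses the Kondacs--Watrous timing trick \emph{during the scan of the input}: the machine splits at $\Cent$ into $N$ paths, every path increments the counter on $a$ and decrements on $b$ (so each path's counter holds exactly $|w|_a-|w|_b$), and $\mathsf{path}_j$ pauses $j$ extra steps on every $b$ and $N-j+1$ extra steps on every $c$. Hence $\mathsf{path}_j$ reaches $\dollar$ at time $|w|+j|w|_b+(N-j+1)|w|_c+1$, so all $N$ paths arrive simultaneously iff $|w|_b=|w|_c$. A single $N$-way QFT at $\dollar$ then sends the synchronized superposition entirely to a rejecting state, while a lone path is spread over $N$ states of which $N-1$ accept. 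Together with the blind-counter check of $|w|_a=|w|_b$, this already gives acceptance probability exactly $\tfrac{N-1}{N}$ on members and $0$ on non-members; no subsequent amplification is needed or used.

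Your alternative route cannot be completed as stated. If, as you propose, different branches carry different linear combinations of $|w|_a,|w|_b,|w|_c$ in the counter so that interference encodes $|w|_a\neq|w|_c$, then some accepting branch has a counter that vanishes on a hyperplane other than $|w|_a=|w|_b$; you yourself note this, and ``residue-dependent counter updates'' do not remove it, because any $\mathbb{Z}$-linear form in the three letter counts that is not a multiple of $|w|_a-|w|_b$ vanishes on inputs with $|w|_a\neq|w|_b$. Conversely, if every branch keeps counter $|w|_a-|w|_b$, then at $\dollar$ the branches differ only in their finite state, which carries no information about $|w|_c$, so no QFT can make the accepting amplitude depend on $|w|_a$ vs.\ $|w|_c$. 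Your amplification phase on $\dollar$ inherits the same defect: once the input is consumed, the only input-dependent data are the counter value (unreadable) and the finite state, and any ``neutral'' state reached with nonzero amplitude on a member is reached with the same transition amplitudes on the non-member with $|w|_a=|w|_b=|w|_c$, so further rounds from that state cannot raise completeness without destroying soundness. The missing idea is precisely to let the head-pausing happen over the $b$'s and $c$'s themselves, so that the second comparison is stored in \emph{arrival time} rather than in the counter or the finite control.
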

\begin{proof}

We construct a 1Q1CA with blind counter $
\mathcal{M}_{2}=(Q,\Sigma,\Omega,\delta,q_{1},\omega_{n},\Omega_{a},\Omega_{r}) $,
where $ Q=\{q_{1}\} \cup \{ q_{j,k} \mid 1 \leq j \leq N \mbox{ and } 1 \leq k \leq \max\{j+1,N-j+2\} \} 
\cup \{ p_{k} \mid 1 \leq k \leq N \} $, $ \Sigma=\{a,b,c\}$, 
$ \Omega = \{ \omega_{n}, \omega_{a}, \omega_{r} \} $, $\Omega_{a}=\{\omega_{a}\} $, and $\Omega_{r}=\{\omega_{r}\} $.
The transition function $\delta$ is depicted in Figure \ref{fig:L4}, with the amplitudes of all unspecified transitions 
understood to be filled in so that the conditions in  \ref{app:well-formedness} are satisfied.\footnote{It is easy to complete the missing portions of $\delta$, since this particular 1Q1CA has the property that all nonzero-amplitude transitions to any state $q'$ move the head in the same way, satisfying conditions (\ref{equation:appeq4}-\ref{equation:lastapp})  in the list in \ref{app:well-formedness} trivially.}

\begin{figure}[h]	
	\centering
	\mbox{
	\begin{minipage}{\textwidth}		
		\[
			\begin{array}{|l|l|}
				\hline
				\multicolumn{1}{|c|}{} &
				\multicolumn{1}{c|}{\Cent}
				\\ \hline
				& 	\begin{array}{lcl}
						\delta ( q_{1},\Cent ) & = &  \sum\limits_{j=1}^{N} \frac{1}{\sqrt{N}} 
							( q_{j,1},0,\rightarrow,\omega_{n} )
					\end{array}	
				\\
				\hline \hline
				\multicolumn{1}{|c|}{} &
				\multicolumn{1}{c|}{a}
				\\ \hline
				\mathsf{path_{j}~ (1 \le j \le N)} 
								&
					\begin{array}{lcl}
						\delta ( q_{j,1},a ) & = & (q_{j,1},+1,\rightarrow,\omega_{n} ) \\		
					\end{array}
				\\
				\hline \hline
				\multicolumn{1}{|c|}{} &
				\multicolumn{1}{c|}{b} 
				\\ \hline
				\mathsf{path_{j}~ (1 \le j \le N)} 
								&	
					\begin{array}{lcl}
						\delta ( q_{j,1},b ) & = & (q_{j,2},-1,\downarrow,\omega_{n} ) \\	
						\delta ( q_{j,k},b ) & = &  (q_{j,k+1},0,\downarrow,\omega_{n}) ~ (1 < k < j+1) \\
						\delta ( q_{j,j+1},b ) & = &  (q_{j,1},0,\rightarrow,\omega_{n}) \\
					\end{array}
				\\
				\hline \hline
				\multicolumn{1}{|c|}{} &
				\multicolumn{1}{c|}{c} 
				\\
				\hline
				\mathsf{path_{j}~ (1 \le j \le N)} 
								& 
					\begin{array}{lcl}
						\delta ( q_{j,k},c ) & = & ( q_{j,k+1},0,\downarrow,\omega_{n}) ~ ( 0 < k < N-j+2 ) \\
						\delta ( q_{j,N-j+2},c ) & = & ( q_{j,1},0,\rightarrow,\omega_{n})
					\end{array}
				\\ 
				\hline \hline
				\multicolumn{1}{|c|}{} &
				\multicolumn{1}{c|}{\dollar} 
				\\
				\hline
				\mathsf{path_{j}~ (1 \le j \le N)}  
				& 
					\begin{array}{lcl}
						\delta ( q_{j,1},\dollar ) & = & 
							\frac{1}{\sqrt{N}} \sum\limits_{l=1}^{N} e^{\frac{2 \pi i }{N}jl} 
								(p_{l},0,\downarrow,\omega_{n})
						\end{array}
				\\ \hline
					& \begin{array}{lcl}
						\delta( p_{l},\dollar ) & = & (p_{l},0,\downarrow,\omega_{a}) ~ (1 \le l < N) \\
						\delta( p_{N},\dollar ) & = & (p_{N},0,\downarrow,\omega_{r})
					\end{array}
				\\ \hline
			\end{array}			
		\]
	\end{minipage}
	}
	\caption{The program of 1Q1CA $ \mathcal{M}_{2} $ which recognizes $L_{4}$}
	\label{fig:L4}
\end{figure}

Similarly to Figure \ref{fig:L3-unordered}, the transitions applicable in the case where $ \mathcal{M}_{2} $ is in state $ q \in Q $ and reading
	symbol $ \sigma \in \tilde{\Sigma} $ are represented as
	\[
		\delta(q,\sigma) =  \sum_{(q',c,d,\omega) \in Q  \times \lozenge \times D \times \Omega }
			\delta(q,\sigma,0,q',c,d,\omega) (q',c,d,\omega)
	\]
in Figure \ref{fig:L4}.
	
$ \mathcal{M}_{2} $ starts by splitting to $N$ computational paths that traverse the input $w$ with equal amplitude. Each of these paths computes the difference $|w|_{a} - |w|_{b}$ using the counter, as described for the machine $ \mathcal{M}_{1} $ of the previous subsection. The input head passes over the $a$'s without any delay, whereas, for each $j$, ($1 \le j \le N$) $\mathsf{path_{j}}$  remains stationary for $j$ steps over each $b$,   and for $N-j+1$ steps over each $c$, before going on to the next symbol. Each path pauses over the right end-marker for one step to perform an $N$-way version of the QFT, described by the entry for $\delta ( q_{j,1},\dollar )$ in Figure \ref{fig:L4}, and accepts or rejects in the subsequent step, depending on the state entered as a result of this transform.

Let us describe the technique used for comparing the numbers of $b$'s and $c$'s, which was first employed in the celebrated two-way QFA algorithm in \cite{KW97}. Due to the delays on these symbols, $\mathsf{path_{j}}$ reaches the final $\dollar$ at the $(|w|+j|w|_{b}+(N-j+1)|w|_{c}+1)^{th}$ step of the execution. It is easy to see that all paths arrive at this point simultaneously if $|w|_{b}=|w|_{c}$, and no two distinct paths have the same runtime if $|w|_{b} \neq |w|_{c}$. 

If $w \in L_{4}$, the $N$-way QFT is therefore performed separately by each $\mathsf{path_{j}}$:
\[ \frac{1}{\sqrt{N}} \ket{q_{j,1},|\tilde{w}|,0} \rightarrow \sum\limits_{l=1}^{N} \frac{1}{N} 
							 \ket{p_{l},|\tilde{w}|,0} \]
In the final step, $N-1$ of the branches created by the QFT accept, and only one rejects. The acceptance probability contributed by  $\mathsf{path_{j}}$ is $(N-1)(\frac{1}{N})^{2}$. Since there are $N$ paths, $w$ is accepted with probability $\frac{N-1}{N}$.

If $w \notin L_{4}$, there are two possible cases: If $|w|_{a} \neq |w|_{b}$, the counter can not equal zero at the end of any computational branch, and $w$ is rejected with probability 1. If $|w|_{a} = |w|_{b}$ and $|w|_{b} = |w|_{c}$, $ \mathcal{M}_{2} $ is in superposition
\[ \sum\limits_{j=1}^{N} \frac{1}{\sqrt{N}} 
							 \ket{q_{j,1},|\tilde{w}|,0}  \]
just before the QFT, and is transformed to the superposition $\ket{p_{N},|\tilde{w}|,0}$, which also leads to rejection with probability 1.
\end{proof}

\subsection{Remarks on Two-Way One-Counter Automata} \label{section:two-way}

In the only work on quantum counter automata with two-way access to their input \cite{YKI05}, 
Yamasaki \textit{et al.} demonstrated that the non-context-free languages 
$L_{square} = \{a^{m}b^{m^{2}} \mid m\geq 1\}$,
$L_{prod} = \{a^{m_{1}}b^{m_{2}}c^{m_{1}m_{2}} \mid m_{1},m_{2}\geq 1\}$,
$ L_{power} = \{ a^{m}b^{2^{m}}\mid m\geq 1 \} $, and
$ L_{m,m^{2},\ldots,m^{k}} = \{ a_{1}^{m}a_{2}^{m^{2}}\cdots a_{k}^{m^{k}} \mid m\geq 1 \} $
can be recognized with bounded error 
by quantum two-way one-counter
 automata (2Q1CAs) in polynomial time. In this section, we will point out that these languages can also be recognized by machines whose definitions correspond to two different specializations of the 2Q1CA:
 
It has recently been proven \cite{YS10B} that all  four languages mentioned above, together with a rich collection of other languages that includes $ L_{pal} = \{w \mid w = w ^{reverse} \} $, 
$ L_{twin} =\{ wcw \mid w \in \{a,b\}^{*} \} $, and all \textit{polynomial languages} of the form $ \{a_{1}^{n_{1}} \cdots a_{k}^{n_{k}} b_{1}^{p_{1}(n_{1},\ldots,n_{k})} \cdots 
	b_{r}^{p_{1}(n_{1},\ldots,n_{k})} \mid p_{i}(n_{1},\ldots,n_{k}) \ge 0 \}$, 
where $ a_{1}, \ldots, a_{k},b_{1}, \ldots, b_{r} $ are distinct symbols, and 
each $ p_{i} $ is a polynomial with integer coefficients, can be recognized by two-way quantum finite
automata (2QFAs)  with bounded error in exponential time. Note that a 2QFA is a 2Q1CA without a counter.

In 1994, Petersen showed \cite{Pe94} that $L_{square}$,
$L_{prod}$, and
$ L_{power}$ can be recognized by polynomial-time deterministic two-way one-counter automata (2D1CAs). It is not difficult to see that the machines for $L_{square}$ and
$L_{prod}$  can be combined \cite{YKI05} to obtain a polynomial-time 2D1CA for recognizing 
$ L_{m,m^{2},\ldots,m^{k}}$ as well. The only advantage of the machines of \cite{YKI05} over those of \cite{Pe94} is that the 2Q1CA for $L_{power}$ has linear runtime, whereas the 2D1CA runs in time $O(n \log^{2} n)$.

We therefore note that the following problems are open:
\begin{openproblem}
   Is there a language that can be recognized by a 2Q1CA, but not by a probabilistic two-way one-counter automaton (2P1CA) with bounded error?
\end{openproblem}
\begin{openproblem}
 Is there a language that can be recognized by a bounded-error 2P1CA, but not by a 2D1CA?\end{openproblem}
\begin{openproblem}
   Is there a language that can be recognized by a 2Q1CA, but not by a 2QFA with bounded error?
\end{openproblem}
As mentioned in Fact \ref{fact:nh}, the language $L_{NH}$ can be recognized by a deterministic real-time one-counter automaton. We conjecture that no 2QFA can recognize $L_{NH}$ with bounded error, and that the answer to the last question above is negative.

\section{Conclusion} \label{section:ConcludingRemarks}
Since counter automata are specializations of pushdown automata (PDAs), Theorem \ref{theorem:abuth-unordered} also establishes the superiority of real-time quantum PDAs over probabilistic ones, extending the result of Nakanishi \textit{et al.} \cite{NHK06}. 
We conclude with the following questions.
\begin{openproblem}
   Is there a non-context-free language that can be recognized with one-sided error bound less than 
   $ \frac{3}{4} $ by a rtQ1CA?
\end{openproblem}
\begin{openproblem}
Is there a language that can be recognized by a rtQ1CA, but not by a rtP1CA with (two-sided) bounded error? In particular, can a rtQ1CA recognize the language 
$ L_{IP}= \{uv^{reverse} \in \{0, 1\}^{*} \mid |u| = |v|, \sum_{i=1}^{|u|} u_i \cdot v_i = 1 \mod 2 \} $ \cite{HS10}?
\end{openproblem}
\begin{openproblem}
   Is there a language that can be recognized by a 1Q1CA, but not by a one-way QFA with bounded error?
\end{openproblem}

\appendix

\section{Local Conditions for Well-Formedness of a 1Q1CA} \label{app:well-formedness}

Let $ \mathcal{M} = (Q,\Sigma,\Omega,\delta,q_{1},\Omega_{a},\Omega_{r},\omega_{1}) $ be a 1Q1CA.
Suppose that the configurations of $\mathcal{M}$ are indexed by positive integers.
Let $ v_{i}^{\omega} $ be an infinite-dimensional vector such that
$ v_{i}^{\omega}[k] $ is the amplitude of the transition
from the $ i^{th} $ configuration to the $ k^{th} $ configuration
that writes $ \omega \in \Omega $ in the finite register
if the $ k^{th} $ configuration is reachable from the $ i^{th} $ configuration
in a single step, and is zero otherwise.
A well-formed machine must satisfy\footnote{The reader is referred to \cite{YS11A} for a detailed treatment of this issue.} 
\begin{equation}
	\label{eq:orthonormal-vectors}
	\sum_{\omega \in \Omega} (v_{i}^{\omega})^{\dagger} \cdot v_{j}^{\omega} =
	\left\lbrace \begin{array}{ll}
		1 & \mbox{ if } i=j, \\
		0 & \mbox{ otherwise}
	\end{array} \right.
\end{equation}
for all $i$, $j$.
Note that $ (v_{i}^{\omega}[k])^{*} \cdot v_{j}^{\omega}[k] =0 $ 
unless both the $ i^{th} $ and the $ j^{th} $ configurations yield  the $ k^{th} $ configuration in a single step.

Let $(q_{1},x_{1},k_{1}) $ and $ (q_{2},x_{2},k_{2}) $ be two configurations.
These can possibly evolve to the same configuration in one step only if $ | x_{1}-x_{2} | \leq 1 $ or $ | k_{1}-k_{2} | \leq 2 $.
By considering all possible combinations of these differences, we obtain the following list of restrictions on $ \delta $ in order for it to satisfy
Eq. (\ref{eq:orthonormal-vectors}).
For any choice of $ q_{1},q_{2} \in Q $; $ \sigma,\sigma_{1},\sigma_{2} \in \tilde{\Sigma} $; 
$ \theta, \theta_{1}, \theta_{2} \in \Theta $; and $ d_{1},d_{2} \in D $:
\\
1. $ x_{1} = x_{2} $ and $ k_{1} = k_{2} $:
\begin{equation}
	\sum\limits_{q' \in Q, c \in \lozenge,d \in D, \omega \in \Omega  }
		\overline{ \delta(q_{1},\sigma,\theta,q',c,d,\omega) } 
		\delta(q_{2},\sigma,\theta,q',c,d,\omega)
		= \left\lbrace \begin{array}{ll}
                       1 ~~ &  \mbox{if }q_{1} = q_{2},  \\
                       0 &  \mbox{otherwise.}
		\end{array}  \right. 
\end{equation}
2. $ x_{1} = x_{2} $ and $ k_{1} = k_{2}-1 $ ($ x_{1} = x_{2} $ and $ k_{1} = k_{2}+1 $):
\begin{equation}
	\begin{array}{ll}
		\sum\limits_{q' \in Q, d \in D, \omega \in \Omega  }
		&
		\overline{ \delta(q_{1},\sigma,\theta_{1},q',+1,d,\omega) } 
		\delta(q_{2},\sigma,\theta_{2},q',0,d,\omega) \\
		& + \overline{ \delta(q_{1},\sigma,\theta_{1},q',0,d,\omega) } 
		\delta(q_{2},\sigma,\theta_{2},q',-1,d,\omega) = 0.
	\end{array} 
\end{equation}
3. $ x_{1} = x_{2} $ and $ k_{1} = k_{2}-2 $ ($ x_{1} = x_{2} $ and $ k_{1} = k_{2}+2 $):
\begin{equation}
	\sum\limits_{q' \in Q, d \in D, \omega \in \Omega  }
	\overline{ \delta(q_{1},\sigma,\theta_{1},q',+1,d,\omega) } 
		\delta(q_{2},\sigma,\theta_{2},q',-1,d,\omega) = 0.
\end{equation}
4. $ x_{1} = x_{2}-1 $  and $ k_{1} = k_{2} $ ($ x_{1} = x_{2}+1 $ and $ k_{1} = k_{2} $):
\begin{equation}\label{equation:appeq4}
	\sum\limits_{q' \in Q, c \in \lozenge, \omega \in \Omega  }
	\overline{ \delta(q_{1},\sigma_{1},\theta,q',c,\rightarrow,\omega) } 
		\delta(q_{2},\sigma_{2},\theta,q',c,\downarrow,\omega) = 0.
\end{equation}
5. $ x_{1} = x_{2}-1 $ and $ k_{1} = k_{2}-1 $ ($ x_{1} = x_{2}+1 $ and $ k_{1} = k_{2}+1 $):
\begin{equation}
	\begin{array}{ll}
		\sum\limits_{q' \in Q, \omega \in \Omega  }
		&
		\overline{ \delta(q_{1},\sigma_{1},\theta_{1},q',+1,\rightarrow,\omega) } 
		\delta(q_{2},\sigma_{2},\theta_{2},q',0,\downarrow,\omega) \\
		& + \overline{ \delta(q_{1},\sigma_{1},\theta_{1},q',0,\rightarrow,\omega) } 
		\delta(q_{2},\sigma_{2},\theta_{2},q',-1,\downarrow,\omega) = 0.
	\end{array} 
\end{equation}
6. $ x_{1} = x_{2}-1 $ and $ k_{1} = k_{2}+1 $ ($ x_{1} = x_{2}+1 $ and $ k_{1} = k_{2}-1 $):
\begin{equation}
	\begin{array}{ll}
		\sum\limits_{q' \in Q, \omega \in \Omega  }
		&
		\overline{ \delta(q_{1},\sigma_{1},\theta_{1},q',-1,\rightarrow,\omega) } 
		\delta(q_{2},\sigma_{2},\theta_{2},q',0,\downarrow,\omega) \\
		& + \overline{ \delta(q_{1},\sigma_{1},\theta_{1},q',0,\rightarrow,\omega) } 
		\delta(q_{2},\sigma_{2},\theta_{2},q',+1,\downarrow,\omega) = 0.
	\end{array} 
\end{equation}
7. $ x_{1} = x_{2}-1 $ and $ k_{1} = k_{2}-2 $ ($ x_{1} = x_{2}+1 $ and $ k_{1} = k_{2}+2 $):
\begin{equation}
	\sum\limits_{q' \in Q, \omega \in \Omega  }
	\overline{ \delta(q_{1},\sigma_{1},\theta_{1},q',+1,\rightarrow,\omega) } 
		\delta(q_{2},\sigma_{2},\theta_{2},q',-1,\downarrow,\omega) = 0.
\end{equation}
8. $ x_{1} = x_{2}-1 $ and $ k_{1} = k_{2}+2 $ ($ x_{1} = x_{2}+1 $ and $ k_{1} = k_{2}-2 $):
\begin{equation}\label{equation:lastapp}
	\sum\limits_{q' \in Q, \omega \in \Omega  }
	\overline{ \delta(q_{1},\sigma_{1},\theta_{1},q',-1,\rightarrow,\omega) } 
		\delta(q_{2},\sigma_{2},\theta_{2},q',+1,\downarrow,\omega) = 0.
\end{equation}
\bibliographystyle{plain}
\bibliography{SAYYAKARYILMAZ}

\begin{thebibliography}{10}

\bibitem{AI99}
Masami Amano and Kazuo Iwama.
\newblock Undecidability on quantum finite automata.
\newblock In {\em STOC'99: Proceedings of the thirty-first annual ACM symposium
  on Theory of computing}, pages 368--375. ACM, 1999.

\bibitem{BFK01}
Richard Bonner, R\={u}si\c{n}\v{s} Freivalds, and Maksim Kravtsev.
\newblock Quantum versus probabilistic one-way finite automata with counter.
\newblock In {\em SOFSEM 2001: Theory and Practice of Computer Science}, volume
  2234 of {\em Lecture Notes in Computer Science}, pages 181--190, 2001.

\bibitem{Bo03}
Symeon Bozapalidis.
\newblock Extending stochastic and quantum functions.
\newblock {\em Theory of Computing Systems}, 36(2):183--197, 2003.

\bibitem{DS90}
Cynthia Dwork and Larry Stockmeyer.
\newblock A time complexity gap for two-way probabilistic finite-state
  automata.
\newblock {\em SIAM Journal on Computing}, 19(6):1011--1123, 1990.

\bibitem{Hi10}
Mika Hirvensalo.
\newblock Quantum automata with open time evolution.
\newblock {\em International Journal of Natural Computing Research},
  1(1):70--85, 2010.

\bibitem{HS10}
Juraj Hromkovi\v{c} and Georg Schnitger.
\newblock On probabilistic pushdown automata.
\newblock {\em Information and Computation}, 208:982--995, 2010.

\bibitem{Je07}
Emmanuel Jeandel.
\newblock Topological automata.
\newblock {\em Theory of Computing Systems}, 40(4):397--407, 2007.

\bibitem{Ka91}
J\={a}nis Ka{\c{n}}eps.
\newblock Stochasticity of the languages acceptable by two-way finite
  probabilistic automata.
\newblock {\em Discrete Mathematics and Applications}, 1:405--421, 1991.

\bibitem{KW97}
Attila Kondacs and John Watrous.
\newblock On the power of quantum finite state automata.
\newblock In {\em FOCS'97: Proceedings of the 38th Annual Symposium on
  Foundations of Computer Science}, pages 66--75, 1997.

\bibitem{Kr99}
Maksim Kravtsev.
\newblock Quantum finite one-counter automata.
\newblock In {\em SOFSEM'99: Theory and Practice of Computer Science}, volume
  1725 of {\em Lecture Notes in Computer Science}, pages 431--440, 1999.

\bibitem{NHK06}
Masaki Nakanishi, Kiyoharu Hamaguchi, and Toshinobu Kashiwabara.
\newblock Expressive power of quantum pushdown automata with classical stack
  operations under the perfect-soundness condition.
\newblock {\em IEICE Transactions}, E89-D(3):1120--1127, March 2006.

\bibitem{NH71}
Masakazu Nasu and Namio Honda.
\newblock A context-free language which is not acceptable by a probabilistic
  automaton.
\newblock {\em Information and Control}, 18(3):233--236, 1971.

\bibitem{NC00}
Michael~A. Nielsen and Isaac~L. Chuang.
\newblock {\em Quantum Computation and Quantum Information}.
\newblock Cambridge University Press, 2000.

\bibitem{Pe94}
H.~Petersen.
\newblock Two-way one-counter automata accepting bounded languages.
\newblock {\em ACM SIGACT News}, 25(3):102--105, September 1994.
\newblock DOI: 10.1145/193820.193835.

\bibitem{Ra63B}
Michael~O. Rabin.
\newblock Real-time computation.
\newblock {\em Israel Journal of Mathematics}, 1(4), 1963.

\bibitem{Ra92}
Bala Ravikumar.
\newblock Some observations on 2-way probabilistic finite automata.
\newblock In {\em Proceedings of the 12th Conference on Foundations of Software
  Technology and Theoretical Computer Science}, pages 392--403.
  Springer-Verlag, 1992.

\bibitem{SYY10}
A.~C.~Cem Say, Abuzer Yakary{\i}lmaz, and {\c S}efika Y\"{u}zsever.
\newblock Quantum one-way one-counter automata.
\newblock In R\={u}si\c{n}\v{s} Freivalds, editor, {\em Randomized and quantum
  computation}, pages 25--34, 2010.
\newblock Satellite workshop of MFCS and CSL 2010.

\bibitem{SHL65}
Richard~Edwin Stearns, Juris Hartmanis, and Philip M.~Lewis II.
\newblock Hierarchies of memory limited computations.
\newblock In {\em IEEE Conference Record on Switching Circuit Theory and
  Logical Design}, pages 179--190, 1965.

\bibitem{Wa03}
John Watrous.
\newblock On the complexity of simulating space-bounded quantum computations.
\newblock {\em Computational Complexity}, 12(1-2):48--84, 2003.

\bibitem{YS10A}
Abuzer Yakary{\i}lmaz and A.~C.~Cem Say.
\newblock Languages recognized by nondeterministic quantum finite automata.
\newblock {\em Quantum Information and Computation}, 10(9\&10):747--770, 2010.

\bibitem{YS10B}
Abuzer Yakary{\i}lmaz and A.~C.~Cem Say.
\newblock Succinctness of two-way probabilistic and quantum finite automata.
\newblock {\em Discrete Mathematics and Theoretical Computer Science},
  12(4):19--40, 2010.

\bibitem{YS11A}
Abuzer Yakary{\i}lmaz and A.~C.~Cem Say.
\newblock Unbounded-error quantum computation with small space bounds.
\newblock {\em Information and Computation}, 209(6):873--892, 2011.

\bibitem{YFSA11A}
Abuzer Yakaryılmaz, R\={u}si\c{n}\v{s} Freivalds, A.~C.~Cem Say, and Ruben
  Agadzanyan.
\newblock Quantum computation with write-only memory.
\newblock {\em Natural Computing}, 2011.
\newblock DOI: 10.1007/s11047-011-9270-0.

\bibitem{YKI05}
Tomohiro Yamasaki, Hirotada Kobayashi, and Hiroshi Imai.
\newblock Quantum versus deterministic counter automata.
\newblock {\em Theoretical Computer Science}, 334(1-3):275--297, 2005.

\bibitem{YKTI02}
Tomohiro Yamasaki, Hirotada Kobayashi, Yuuki Tokunaga, and Hiroshi Imai.
\newblock One-way probabilistic reversible and quantum one-counter automata.
\newblock {\em Theoretical Computer Science}, 289(2):963--976, 2002.

\end{thebibliography}

\end{document}